\begin{document}
\title {Liquidity in Credit Networks with Constrained Agents}

\author{Geoffrey Ramseyer}
\email{geoff.ramseyer@cs.stanford.edu}
\affiliation{%
  \institution{Stanford University}
}

\author{Ashish Goel}

\email{ashishg@stanford.edu}
\affiliation{%
  \institution{Stanford University}
}

\author {David Mazi{\`e}res}
\affiliation{%
  \institution{Stanford University}
}

\begin{abstract}
  In order to scale transaction rates for deployment across the
  global web, many cryptocurrencies have deployed so-called "Layer-2"
  networks of private payment channels.  An idealized payment network
  behaves like a {\em Credit Network}, a model for transactions across a
  network of bilateral trust relationships. Credit
  Networks capture many aspects of traditional currencies as well as
  new virtual currencies and payment mechanisms. In the traditional credit network model,
  if an agent defaults, every other node that trusted it is vulnerable
  to loss.  In a cryptocurrency context, trust is manufactured by capital deposits, and thus there arises a
  natural tradeoff between network liquidity (i.e. the fraction of transactions that succeed) and the cost of capital deposits.

  In this paper, we introduce constraints that bound the
  total amount of loss that the rest of the network can suffer if an
  agent (or a set of agents) were to default - equivalently, how the network changes
  if agents can support limited solvency guarantees.

  We show that these constraints preserve the analytical structure of a credit network.  
  Furthermore, we show that aggregate borrowing constraints greatly simplify the network structure and 
  in the payment network context achieve the optimal tradeoff between liquidity and amount of escrowed capital.


\end{abstract}
 \begin{CCSXML}
<ccs2012>

<concept>
<concept_id>10010405.10003550.10003554</concept_id>
<concept_desc>Applied computing~Electronic funds transfer</concept_desc>
<concept_significance>500</concept_significance>
</concept>

<concept>
<concept_id>10010405.10003550.10003551</concept_id>
<concept_desc>Applied computing~Digital cash</concept_desc>
<concept_significance>300</concept_significance>
</concept>

<concept>
<concept_id>10002950.10003624.10003633.10003644</concept_id>
<concept_desc>Mathematics of computing~Network flows</concept_desc>
<concept_significance>100</concept_significance>
</concept>
</ccs2012>
\end{CCSXML}

\ccsdesc[500]{Applied computing~Electronic funds transfer}
\ccsdesc[300]{Applied computing~Digital cash}
\ccsdesc[100]{Mathematics of computing~Network flows}
\keywords{Trust, Credit Networks, Electronic Fund Transfer}

\maketitle

\begin{acks}

The authors would like to thank the anonymous reviewers for their helpful suggestions on the presentation of this work.  This work was supported by the Stanford Center for Blockchain Research.

\end{acks}

\section{Introduction}

Practical implementations of markets require easily useable liquid currency.  But as transactions become larger and more frequent, moving and storing a physical asset, like gold or dollar bills, becomes very expensive.

Instead, people transfer money via promises to pay later.  Banks, for example, used to issue physical bank notes in exchange for gold deposits.  Individuals write checks to each other.  Some non-governmental organizations issue their own currencies.  Retailers issue prepaid gift cards.

Importantly, these debt notes are tradeable independent of the original issuer as a form of currency.  But to a first approximation, rational individuals will only accept a debt note if they trust the original issuer to satisfy the obligation.\footnote{Or perhaps if the individual expects that others expect that the issuer will satisfy the obligation.  This does not materially affect any of our analysis.}  For example, bank notes are usable in lieu of deposited currency, but only so long as the bank does not collapse.  In the United States, the Federal Deposit Insurance Corporation ensures that every bank's notes are redeemable in quantities up to \$250,000.  This enables individuals who use different banks to transact freely, but individuals might want to make sure that they do not have more than \$250,000 deposited in a single bank.

In order to send a payment, then, a payer needs to exchange her notes for notes acceptable to the payee.   Moreover, careful individuals might track the total value of notes owned from a single issuer, to mitigate exposure to the default of a single agent.  

However, our scenario need not solely consist of consumers transacting using notes issued by large institutions.  Individuals can also issue their own debt obligations, via, for example, checks or interpersonal promises, and furthermore can trade these with each other.  Generally speaking, most individuals trust their friends to repay small debts, but might worry about being repaid if one friend repeatedly tries to borrow lots of money.

\begin{figure}

\centering
\caption{A credit network.}
\includegraphics[width=8cm]{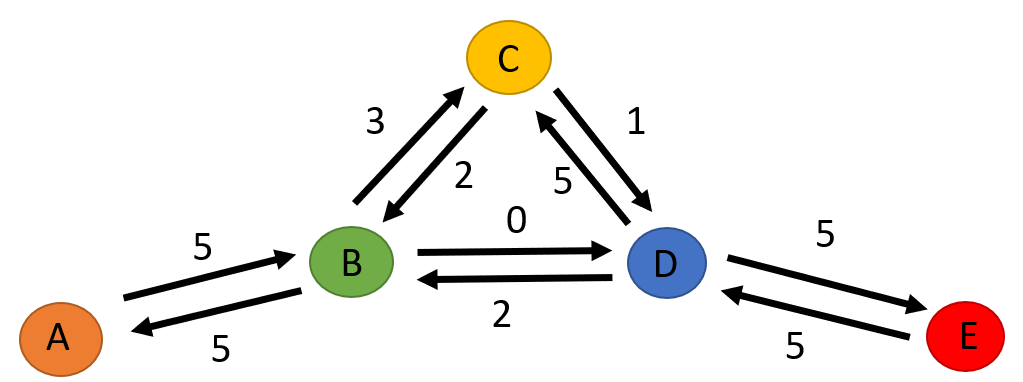}
\Description{A graph with 5 nodes.  Some nodes are connected by a pair of directed, weighted edges.}

\label{fig:cn}
\end{figure}

Formally, then, consider a system in which agents $u_1,...,u_n$ are trading debt notes, and where individual $u_i$ is willing to accept $w(u_i, u_j)$ notes from individual $u_j$.  The graph in figure \ref{fig:cn} provides a visual representation of a credit network with $5$ agents, where an arrow from agent $A$ to agent $B$ labeled with $5$ means that $w(A,B)=5$.  As a real-world example, in an economically healthy country, individuals are typically willing to accept functionally infinite numbers of notes from the central bank, and can transact by trading these notes.  The resulting network appears star-like, with the central bank at the center.  

Such a model is known as a {\em Credit Network}.  In a general credit network, autonomous agents can issue their own notes, and other agents can choose whether to accept these notes as payment, i.e. they can decide whether to trust any other agent, and for how much.  Money is sent along paths of trust, and reduces the amount of ``residual trust'' along the path; the transaction fails if no path exists. Trust is replenished by a transaction in the opposite direction.  Ghosh et al, \cite{ghosh2007mechanism}, De Figueiredo et al,\cite{defigueiredo2005trustdavis} and Karlan et al\cite{karlan2009trust} independently formulated this model, and Dandekar et al \cite{dandekar2011liquidity} formalized the model's mathematics.  Credit networks have also been used in practice for applications built on existing trust networks.  Examples include P2P systems that enable trading goods across a social network \cite{liu2010p2p}, Ostra \cite{mislove2008ostra}, a system to combat email spam, and the Yootle \cite{reeves2007yootopia}, a currency system for quantifying utility in group decision making.

More recently, credit networks are in use to improve cryptocurrency transaction rate and latency. In most blockchains, all network participants must agree on a global shared state, which limits transaction rate and can cause hours of latency.

Cryptocurrencies enable trustless, anonymous transactions.  But in reality, some pairs of agents might know each other and wish to transact repeatedly.  If these pairs trusted each other, they could transact without putting any information on a blockchain.  Instead, they could privately track the net balance of their transactions and settle this balance only as necessary.

One innovation of the Lightning network \cite{poon2016bitcoin} and analogous ``Layer 2'' networks on other cryptocurrencies is a way of using escrow to build bilateral relationships that are analogous to a traditional credit network's trust-based transaction channels, {\it without} actually requiring real trust for solvency.  Individuals need not put every transaction on the blockchain; rather, they need only to threaten to put their transactions on the blockchain.  Such threats are made credible if two parties put money into escrow on the blockchain, and the net balance between the parties does not exceed the amount of money in escrow.  The result is a large network of private channels of specific ``trust'' capacities, where transactions can route along paths in the graph.  Lightning, therefore, is exactly an implementation of a credit network.

\begin{figure}
\centering

\caption{A Lightning-style network, equivalent to the credit network in figure \ref{fig:cn}.}
\includegraphics[width=8cm]{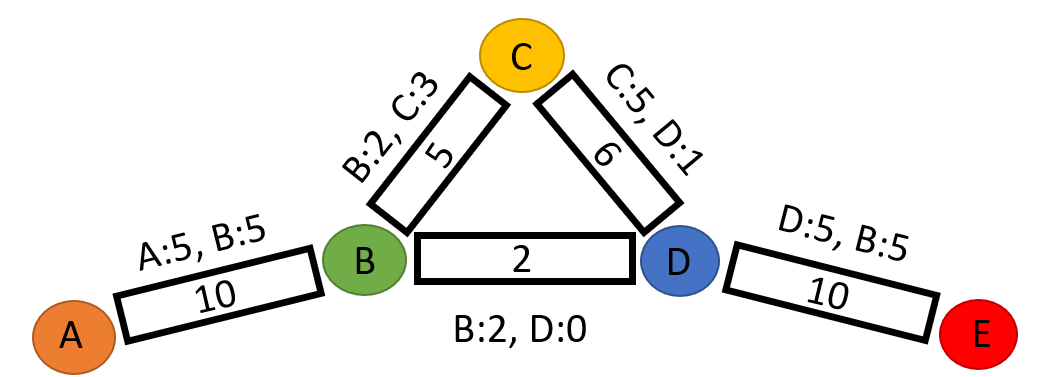}
\Description{A graph where undirected edges have associated escrow accounts and ownership records}

\label{fig:ln}
\end{figure}

For example, consider the Lightning-style network in figure \ref{fig:ln}.  Every undirected edge has, in cryptocurrency parlance, a fixed amount of capital in escrow.  The two parties to an edge possess certificates that record how much of the escrowed capital belongs to each party, and one party can ``pay'' the other by altering this balance.  No party can own more than 100\% of the escrowed capital.  Thus, if $A$ owns $5$ units of the money escrowed on edge $(A, B)$, $B$ can accept up to but no more than $5$ units of money from $A$.  We can model the Lightning network, therefore, as a credit network where, in this case, $w(B,A)=5$.  In fact, the set of transactions possible in Lightning-style network in figure \ref{fig:ln} are exactly those possible in the credit network in figure \ref{fig:cn}.

A given credit network cannot always resolve every possible transaction.  A node that is bankrupt, for example, cannot send additional money.  In cryptocurrency applications, when a transaction cannot be resolved by an overlaid credit network, it typically must instead be resolved on-chain.  On-chain transaction rates are typically quite limited.  The ratio between the number of transactions that a credit network can resolve to the number that it cannot, therefore, acts as a multiplier on the effective transaction rate of a blockchain.

However, placing capital into escrow to secure an edge is expensive.  Typically, more capital in escrow means a higher probability of transaction success (henceforth, liquidity, i.e. the fraction of transactions which succeed, given an exogenous transaction distribution), so agents must balance liquidity against escrow costs.


In this paper, we study how global guarantees on agent behavior (beyond the bilateral trading restrictions from the credit network) can alter the operation of a credit network.  Just as the guarantees on debt fulfillment provided by the FDIC streamline real-world transactions, repayment guarantees in network lending contexts or in cryptocurrency contexts (equivalently, restrictions on an agent's global borrowing) can achieve the optimal tradeoff between liquidity and escrowed capital.


Formally, we study the liquidity of a credit network in which every node is disallowed from borrowing more than some quantity in aggregate from its neighbors; we call these {\em node constraints}.  More generally, we study constraints on the total amount of ``net borrowing'' between any set of nodes and the rest of the network; we call these {\em predicate constraints}. In addition to being natural in their own right, they have specific advantages in many real-life situations that are well modeled by credit networks. We give three examples:
\begin{enumerate}
\item In Lightning, aggregate node constraint would allow pairwise relationships to be truly trust-based and not based on pairwise escrow; each node could be subject to an aggregate node constraint, and secure its relationships by putting just the aggregate amount in escrow.  Such a system can be implemented via multi-party smart contracts.  As we show here, when every node has such a constraint, the system as a whole achieves the optimal tradeoff between liquidity and escrow costs.
\item The popular app SplitWise~\cite{splitwise} allows a group of friends to track shared expenses. A process called ``simplify your debts'' cancels debts along cycles. This ``cycle-canceling'' is an essential aspect of credit networks, and SplitWise can be modeled as a credit network with infinite trust capacities. We believe that node constraints will greatly increase the robustness and usefulness of SplitWise, without substantially decreasing liquidity.
\item   The cryptocurrency Stellar~\cite{mazieres2015stellar} uses credit networks in two different ways.  First, it allows ``anchor'' nodes to
        issue tokens representing claims on fiat currency.
        Users can then issue ``trustlines'' declaring how much of each token they are willing to hold.  
        The resulting network of issued notes and lines of trust is very close to a credit network.  Note that Stellar allows 
        token issuers to lock the issuing account permanently.  This fixes the supply of a token, in effect implementing a node constraint.  And second, Stellar is in the process of building a Layer 2 protocol like Lightning.  
        Greater liquidity in this network would mean cheaper payments and forex trades.  
\end{enumerate}

In \cite{dandekar2011liquidity}, Dandekar et al analyze the liquidity of a network for a few classes of graphs of interest, and use computational simulations to conjecture liquidity when analysis is intractable.  In this work, we extend their results to new classes of networks that can model agent behavior under interesting classes of constraints.  Constraints break an analytical tool fundamental to \cite{dandekar2011liquidity,goel2015connectivity}; we show here how to analyze credit networks and their constrained variants with a new set of analytical tools.  

In section 4, we analyze the liquidity of several natural classes of constrained graphs, showing a tight connection between edge expansion and liquidity.  We then show that imposing node constraints not only preserves liquidity but also simplifies network structure and achieves the optimal tradeoff between liquidity and number of edges.

As an example, any network that extends to two agents $u$ and $v$ $D$ total units of credit has liquidity between that pair at most $1-1/D$.  Note that the graph that achieves this consists of $D$ parallel edges between $u$ and $v$, and thus the liquidity between $u$ and $w\neq v$ is $0$.  In a $d$-regular graph with edge expansion $\beta$ (where edges have capacity 1 and the transaction matrix is uniform), then the total credit available to each node is $d$ but the best known bounds \cite{goel2015connectivity} give liquidity only $1-2/\beta$ on average.  If nodes are constrained to borrow or lend at most $\beta/2$, then the total credit available to each node is $\beta$ and the pairwise liquidity lies between $1-1\beta$ and $1-2/\beta$, achieving the optimal liquidity tradeoff for {\it every pair simultaneously}.

Finally, we remark on some applications to Lightning, particularly how this tightened tradeoff can substantially reduce Lightning's escrow costs, and some open problems related to credit networks.  

\section{The Credit Network Model}

A configuration of a {\em Credit Network} is a directed graph $G=(V,E)$ along with a map $w((u,v))\geq 0$ denoting the amount of $v$'s currency that a node $u$ is willing to accept from $v$. In this article, all credit values will be integral.  For convenience, we say that if an edge $(u,v)\notin E$, then $w(u,v)=0$ and vice versa.

Suppose that agents $u$ and $v$ are transacting only with each other, and suppose $u$ tries to send one unit of its currency to $v$.  If $w(v,u)=0$, then $v$ is unwilling to accept the note from $u$, and the transaction fails.  But if $w(v,u)=k>0$, then $v$ is willing to accept the note.  Afterwards, $v$ will be only willing to accept an additional $k-1$ notes from $u$, and thus $w(v,u)$ decreases by $1$. Conversely, $v$ now owns one note from $u$ that $u$ must honor, and thus could send $w(u,v)+1$ total notes to $u$.  Hence, $w(u,v)$ increases by $1$, and the total trust $c(u,v)=w(u,v)+w(v,u)$ is constant.  As such, we can refer to a {\em Credit Network} as the undirected analogue of a configuration.  Note that a credit network has many configurations.


We call a transaction between neighbors as above a one-hop transaction.  More generally, multi-hop transaction of value $X$ is a payer $u$, a payee $v$, and a path $(p_0=u, p_1,...,p_t=v)$ from $u$ to $v$.  The transaction is valid if $w(p_i, p_{i+1})\geq X$ for $0\leq i<t$, and performing the transaction means performing a one-hop transaction of value $X$ along every edge $(p_i, p_{i+1})$.  This process is analogous to performing an augmenting path update in a max-flow computation.  For example, the configuration of figure \ref{fig:cn2} is the result of routing one unit from A to E along the route A-B-C-D-E, starting at the configuration in figure \ref{fig:cn}.\footnote{In this paper, we assume that all nodes issue notes in the same denomination.  It does not materially change results to convert different valuations into a common unit of value, as noted in \cite{dandekar2011liquidity}.}

\begin{figure}

\centering
\caption{The credit network of figure \ref{fig:cn}, after A has routed one unit of payment to E.}
\includegraphics[width=8cm]{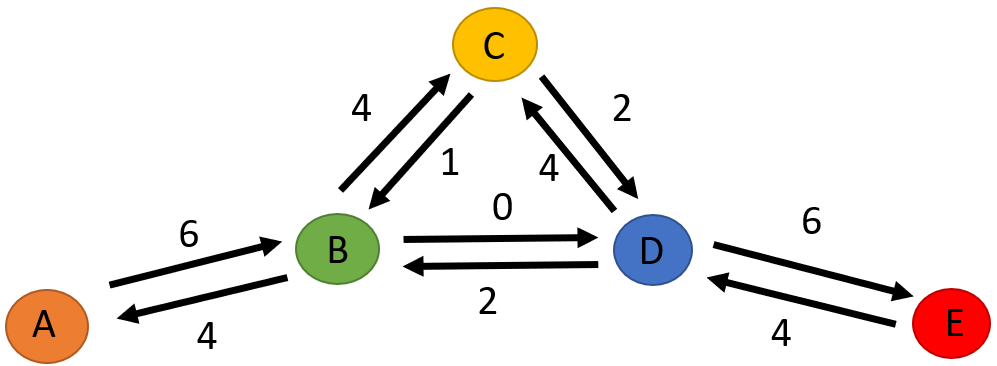}
\Description{The same graph as in \ref{fig:cn}, but with different weights on edges, reflecting that one node has routed payment to another}

\label{fig:cn2}
\end{figure}

Depending on context, we may refer to an edge in a credit network as "trust," or "possesses note", or "net borrowing."  These should be thought of as equivalent.  Every agent trusts the value of the notes they issue, so if some other agent $v$ possesses agent $u$'s issued note, then agent $u$ necessarily trusts that $v$ can send one unit of payment back to $u$.  Net borrowing or lending is relative to some ground state.  However, if we declare that a particular state in some implementation is the neutral state where nobody has transferred any debt notes, then an agent $u$'s "net lending" is the net amount of notes that $u$ has transferred to others - equivalently, the net borrowing is how much of the total trust capacity (in the ground state) that $u$ has used.

\subsection{Properties of Credit Networks}

For use as a payment method, agents care primarily about whether money can be sent in the current network configuration.  The particular details of a configuration in question matter far less.  This suggests the following definition:

\begin{definition}[Transaction Equivalence]

Two configurations of a credit network $C_1$ and $C_2$ are transaction-equivalent if for any list $\tau$ of transactions, all of transactions in $\tau$ can be successfully performed in sequence if the credit network starts at $C_1$ if and only if they can be performed starting at $C_2$.

\end{definition}

This definition will be useful later, but is not immediately useful for understanding the structure of the space of credit network configurations. 

Consider as a demonstrative example a cycle on $n$ vertices where each edge has capacity $1$, and the configuration where all edges have capacity $1$ in the direction towards a vertex $y$ and away from a vertex $x$.  Then clearly $y$ can route $1$ unit of money to $x$ by two distinct routes, by routing either ``clockwise'' or ``counterclockwise''.  After routing such a payment, all edges will be oriented either clockwise or counterclockwise.  Then, for any other vertices $w$ and $z$, no matter which route $y$ chose, $w$ can route exactly one unit of money to $z$.

In fact, the configurations where all edges are routed either clockwise or counterclockwise are transaction equivalent.  Moreover, if in one of these configurations, a vertex routes a payment to itself along the cycle, the network will reach the other configuration of the pair.  This motivates the following definition.

\begin{definition}[Cycle Equivalence (Definition 1, \cite{dandekar2011liquidity}]

Two configurations are cycle-equivalent if and only if one is reachable from the other by routing payments along cycles.

\end{definition}

In the above example, two configurations are cycle-equivalent if and only if they are also transaction-equivalent.  In fact, this correspondence holds for general graphs.
\begin{lemma}[(Lemma 2, \cite{dandekar2011liquidity})]

Two credit network configurations $C_1$ and $C_2$ are transaction-equivalent if and only if $C_1$ and $C_2$ are cycle-equivalent.

\end{lemma}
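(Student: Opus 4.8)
The plan is to establish the two directions separately. The easy direction is that cycle-equivalence implies transaction-equivalence: if $C_2$ is reachable from $C_1$ by routing payments along cycles, then I claim the two configurations admit exactly the same sequences of feasible transactions. For this, observe that routing a payment along a cycle is itself a valid transaction (from a node to itself), and that after any cycle-update, any multi-hop transaction feasible in $C_1$ can be ``mimicked'' in $C_2$: the residual capacity $w(p_i,p_{i+1})$ on a path may have changed, but by a counting/conservation argument along the cycle one shows that wherever capacity decreased on one edge it increased by the same amount on the reverse edge, so total trust $c(u,v)$ is preserved and any transaction blocked in $C_2$ would already have been blocked in $C_1$. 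I would make this precise by induction on the length of the transaction list $\tau$, using the fact (implicit in the one-hop description) that performing a transaction from $C_1$ and then a cycle-update commutes, up to cycle-equivalence, with doing them in the other order.

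The harder direction is that transaction-equivalence implies cycle-equivalence. Here the natural approach is the contrapositive: suppose $C_1$ and $C_2$ are \emph{not} cycle-equivalent; I want to exhibit a transaction feasible from one but not the other. The key object is the ``difference'' between the two configurations: since both have the same underlying credit network (same total trust $c(u,v)$ on each edge), the signed difference in how credit is oriented on each edge defines a flow-like vector $f$ on $G$. The crucial structural claim is that $C_1$ and $C_2$ are cycle-equivalent if and only if this difference $f$ can be written as a sum (with appropriate signs and capacities respected \emph{throughout} the transformation) of cycle-flows --- i.e., $f$ is a circulation that is moreover ``feasibly decomposable.'' If $f$ is not such a circulation, then $f$ must have a vertex $s$ with net imbalance (net outflow or inflow), i.e. the two configurations disagree about how much net payment $s$ has sent. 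Then $s$ can perform in one configuration a transaction --- routing along a path realizing part of $f$ --- that exhausts capacity that simply is not available in the other configuration.

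The main obstacle is the feasibility bookkeeping in this last step: it is not enough for the difference vector $f$ to be a nonzero circulation; one must argue that non-cycle-equivalence produces a genuine witnessing transaction respecting capacities at every hop. I would handle this by taking a flow/path decomposition of $f$ and arguing that if \emph{every} path in some decomposition were a cycle and each were simultaneously routable, then composing those cycle-routings transforms $C_1$ into $C_2$, contradicting non-cycle-equivalence; hence some decomposition has a path with distinct endpoints, or some cycle is not feasibly routable, and in either case a short computation produces an edge where the residual capacity in $C_1$ strictly exceeds that in $C_2$ (or vice versa), yielding a one-hop transaction distinguishing them. A secondary subtlety, worth a remark, is integrality --- since all credit values are integral, the difference vector is integral and its decomposition can be taken into integral cycles and paths, so the witnessing transaction can be taken to move one unit, which keeps the argument clean. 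Since this lemma is quoted from \cite{dandekar2011liquidity}, I expect the paper to cite that proof rather than reproduce it, but the sketch above is the argument I would give.
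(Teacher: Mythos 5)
The paper imports this lemma from Dandekar et al.\ and gives no proof of its own, so there is no in-paper argument to compare against; I will assess your sketch on its merits. Your easy direction (cycle-equivalence implies transaction-equivalence) has the right skeleton --- induction on the transaction list plus a commutation argument --- but the invariant you lean on is the wrong one: the total trust $c(u,v)$ is preserved in \emph{every} configuration of the same credit network, so it distinguishes nothing. The quantity that actually certifies feasibility is the residual capacity across each cut separating payer from payee; a cycle crosses any cut the same number of times in each direction, so all cut capacities (and hence feasibility of any single transaction) are invariants of the cycle-equivalence class. That is the statement your induction needs.

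The hard direction has a genuine gap, precisely at the point you flag as ``the main obstacle.'' First, your witness construction fails as stated: an edge where the residual capacity in $C_1$ strictly exceeds that in $C_2$ does \emph{not} yield a distinguishing one-hop transaction, because a one-unit payment on that edge may be feasible in both configurations; to convert a capacity discrepancy into transaction-inequivalence you must exhaust capacity by a \emph{sequence} of transactions and argue about a cut total or a score hitting zero. Second, the case where the difference $f$ is a circulation but ``not feasibly decomposable'' is never actually closed --- you need to prove this case cannot occur, i.e., that any two configurations with equal score vectors are joined by feasible cycle routings. That claim is the real content of the lemma (it is the score-vector/cycle-class bijection credited to Kleitman and Winston in Section 2), and your sketch assumes it rather than establishes it; the standard argument picks an edge of disagreement and, using that $f$ is balanced at every vertex, extends it to a cycle feasibly routable in $C_1$ that strictly reduces the disagreement, then iterates. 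With that bijection in hand, the cleanest finish is the one this paper itself uses in Appendix A for the constrained generalization: if $C_1$ and $C_2$ are not cycle-equivalent their score vectors differ, so any transaction sequence $\tau$ carrying $C_1$ to $C_2$ strictly decreases some vertex's score; since scores are bounded below by $0$, some $\tau^k$ is feasible from $C_1$ while $\tau^{k+1}$ is not, yet $\tau^{k+1}(C_1)=\tau^{k}(C_2)$, which exhibits a distinguishing transaction list.
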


An analogue of this lemma also applies configuration changes resulting from processing a transaction.

\begin{lemma}[Route Independence (Theorem 3, \cite{dandekar2011liquidity})]
The cycle-equivalence class that results from routing a payment from a vertex $x$ to a vertex $y$ starting at some configuration $C$ is constant no matter the choice of route.
\end{lemma}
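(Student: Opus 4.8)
The plan is to show that routing $X$ units from $x$ to $y$ along two different routes $p$ and $q$, starting from the same configuration $C$, yields configurations $C_p$ and $C_q$ that differ only by circulations, hence are cycle-equivalent. The conceptual point is that a one-hop transaction step along $p$ transforms the directed weight function by ``subtracting $X$ times the indicator of $p$'': each forward weight $w(p_i,p_{i+1})$ drops by $X$ and each reverse weight $w(p_{i+1},p_i)$ rises by $X$. Since $p$ and $q$ are both $x$--$y$ paths, the difference of the two updates conserves flow at every vertex, i.e.\ it is a sum of simple directed cycles; routing payments along those cycles is exactly what takes one result to the other.

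Concretely, I would first observe that from $C_p$ one can route $X$ units from $y$ back to $x$ along the reversal $\bar p$ of $p$, and that this recovers $C$ exactly: routing $X$ along $p$ raised the weight of every reverse edge $(p_{i+1},p_i)$ by $X$, so each such edge has weight at least $X$ in $C_p$, the reverse routing is valid, and undoing each one-hop step returns every weight to its value in $C$. The same holds for $q$. Hence there is a valid schedule of one-hop transactions carrying $C_p$ to $C$ (namely $\bar p$, routing $y\to x$) followed by one carrying $C$ to $C_q$ (namely $q$, routing $x\to y$); concatenating them, routing $X$ along the closed walk $W=\bar p\cdot q$ based at $y$, processed in this order, is valid from $C_p$ and produces $C_q$.

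It then remains to prove the structural fact that routing a fixed amount $X$ along a closed walk $W$ that is valid when processed in order is equivalent to routing $X$ along a sequence of simple cycles---which is precisely what ``reachable by routing payments along cycles'' requires. I would argue by induction on the number of edges of $W$. If $W$ is already a simple cycle (or empty) we are done; otherwise let $v_j$ be the first repeated vertex, $v_j=v_i$ with $i<j$, so that $\sigma=(v_i,\dots,v_j)$ is a simple cycle and $P=(v_0,\dots,v_i)$ is a simple path. Because $v_0,\dots,v_{j-1}$ are distinct and $v_j=v_i$, no edge of $\sigma$ equals an edge of $P$ or the reverse of an edge of $P$, so executing $\sigma$ and executing $P$ touch disjoint sets of weights; hence $\sigma$ may be moved to the front of the schedule without affecting validity or the final configuration. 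After doing $\sigma$ first, the remaining schedule is routing $X$ along the strictly shorter closed walk $(v_0,\dots,v_i=v_j,v_{j+1},\dots)$, which is still valid from the new starting configuration (its edges were untouched by $\sigma$), so the induction hypothesis applies. Combining with the detour above, $C_p$ reaches $C_q$ by routing payments along cycles, which gives the lemma.

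I expect the main obstacle to be this last step---specifically the bookkeeping showing that the peeled-off simple cycle is edge-disjoint (including reverse edges) from the preceding prefix path, which is what lets the reordering preserve feasibility; the rest is routine. As a consistency check, this also recovers the intuition that the cycle-equivalence class is pinned down by the net-transfer vector on vertices (each vertex's total change in extended credit relative to $C$), since routing along any cycle leaves that vector fixed while routing $X$ from $x$ to $y$ changes it by $X(e_x-e_y)$ regardless of route.
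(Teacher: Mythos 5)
The paper does not actually prove this lemma; it is imported verbatim as Theorem~3 of Dandekar et al.\ \cite{dandekar2011liquidity}, so there is no in-paper argument to compare against. Your self-contained proof is correct. The two key steps --- that routing $X$ along $p$ is exactly undone by routing $X$ along the reversal $\bar p$ (valid because the weight needed for each reverse hop was just increased by $X$), and that a feasible closed-walk schedule can be peeled into simple cycles by extracting the cycle at the first repeated vertex and commuting it past the edge-disjoint prefix --- both check out. One small imprecision: after you move $\sigma$ to the front, your justification that the remaining shorter closed walk is still feasible (``its edges were untouched by $\sigma$'') is only literally true of the prefix $P$; the suffix of $W$ after $v_j$ may well reuse edges of $\sigma$. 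The correct justification, which your setup already supplies, is that $\sigma$ and $P$ act on disjoint weights and therefore commute, so the configuration reached after $\sigma$-then-$P$ equals the one reached after $P$-then-$\sigma$ in the original valid schedule, and the suffix remains feasible from that configuration onward. Worth noting as an alternative: given the fact, used later in the paper via Kleitman--Winston, that a cycle-equivalence class of a fixed credit network is determined by its score vector, the lemma follows in one line, since any route from $x$ to $y$ shifts the score vector by the same amount regardless of the path taken; your argument is longer but has the virtue of not presupposing that characterization.
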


In a broad sense, our object of study is the performance of a credit network, in terms of its ability to resolve attempted transactions.  In most contexts where a credit network could be used, transactions arise from some exogenous process.  For example, many cryptocurrency transactions arise from online commerce or in response to real-world price fluctuations, not from the internal state of the Lightning network.

We will define the {\em liquidity} of a credit network, therefore, as the chance that a random transaction will succeed in a random configuration.  However, this measurement will depend heavily on choices of transaction and configuration distributions.  For a real-world system settling real-world transactions, the relevant distribution on configurations is the distribution that arises from performing transactions drawn from the real-world exogenous transaction distribution.  

Consider, then, the Markov chain on the space of configurations where at each step, a payer vertex $x$ and a payee vertex $y$ are chosen with probability proportional to $\lambda_{xy}\in \mathbb{R}^{\geq 0}$, and $x$ pays one unit of money to $y$ if possible.  The stationary distribution thus gives the relevant configuration distribution.

\begin{definition}[Liquidity]
The {\em liquidity} of a credit network between vertex $x$ and $y$ is the probability that there exists a directed path from $x$ to $y$ of capacity at least 1 in a configuration drawn from the stationary distribution of the induced Markov chain.
\end{definition}

The distribution can be complicated and depends on exact transaction rates.  However, if rates are {\em symmetric} ($\lambda_{xy}=\lambda_{yx}$), then the stationary distribution of the Markov chain is uniform over all reachable cycle-equivalence classes (Theorem 5, \cite{dandekar2011liquidity}).  For the rest of this paper, we will assume that a unique stationary distribution exists; this happens if there are not two sets of agents that never transact with each other.  Liquidity analysis thus reduces to counting these classes.

The following definition will be useful for the rest of the discussion:

\begin{definition}[Score Vector]
The {\em Score Vector} of a configuration $C$ is $S(C)=\lbrace S(C)_v=\sum_u w(v,u)\rbrace_{v\in V}$.  Equivalently, $S(C)_v$ is the weighted outdegree of $v$ in $C$.

\end{definition}

For convenience, we will write $S_v$ to mean $S(\cdot)_v$ when either the specific configuration is clear from context or when we refer to a large class of configurations.

A successful transaction always decreases the score of a payer and increases score of the payee by the same amount.   When the payment is along a cycle, then, the score vector is invariant.  Hence, for a given credit network, a score vector uniquely captures a cycle-equivalence class, and Kleitman and Winston \cite{kleitman1981forests} show that the number of score vectors on a graph is equal to the number of forests of that graph (where a forest is an acyclic subset of edges).  Furthermore, Proposition 2.1 of \cite{goel2015connectivity} shows that the number of cycle-equivalent states where $x$ can pay $y$ is equal to the number of forests that place $x$ and $y$ in the same connected component.   The liquidity analysis of \cite{goel2015connectivity} crucially relies on this correspondence.

\section{Constrained Credit Networks}

\subsection{Node Constraints}

The credit network models transactions in a real-world trust network.  However, the model only accounts for independent bilateral relationships.  But a lender might also care about a borrower's total outstanding obligations. Conversely, one agent might want to limit her total lending.  More generally, suppose that each individual in a graph $G=(V, E)$ wishes to limit her total lending to the other agents, in addition to her bilateral lending limits.  

Let the aggregate limit on an individual $v$ be $c_v$, and suppose that the network is constrained such that in every valid configuration, $\forall v~S_v\leq c_v$.  If $k_v$ is the score of $v$ in the initial configuration, when no debt notes have been issued, this constraint means $v$ is disallowed from issuing more than $c_v-k_v$ notes.

\begin{theorem}

Suppose that the credit network system is required to remain in cycle-equivalent classes where $\forall v$, $S_v\leq c_v$.  Then the following properties over the set of cycle-equivalent classes satisfying these constraints are maintained:

\begin{enumerate}
\item Route-Independence
\item Cycle-equivalence $\iff$ Transaction-equivalence (for reachable cycle-equivalent states)
\item Symmetric transaction distribution $\implies$ the stationary distribution of the induced Markov chain on reachable configurations is uniform.
\end{enumerate}

\label{thm:simplepreds}

\end{theorem}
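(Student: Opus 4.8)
The plan is to isolate a single structural fact and then derive each of the three properties from it. The fact is that the node constraint ``$\forall v,\ S_v\le c_v$'' is a predicate on the score vector $S(C)$ alone, and $S(C)$ is constant on each cycle-equivalence class, since a payment routed around a cycle has coincident payer and payee and hence changes no score. Consequently the set of allowed configurations is a union of whole cycle-equivalence classes, so ``is this configuration allowed?'' is the same question as ``is this cycle class allowed?''; moreover, routing a payment from $x$ to $y$ always moves the score vector from $S$ to the same vector $S-e_x+e_y$ (one less at the payer, one more at the payee), so ``is the class reached by this transaction allowed?'' is itself route-independent. Call this the Key Lemma.

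Given the Key Lemma, Route-Independence is immediate: by Route Independence in the unconstrained network (Theorem~3 of \cite{dandekar2011liquidity}), routing $x\to y$ from a configuration $C$ reaches the well-defined cycle class with score vector $S(C)-e_x+e_y$, and by the Key Lemma this class is allowed or not regardless of the route, which is all the constrained statement asserts once a multi-hop transaction is treated atomically (it succeeds exactly when its target class is allowed). For the stationary-distribution claim I would note that the constrained Markov chain is the unconstrained one with exactly the transitions into disallowed classes deleted; by the Key Lemma this deletion only ever removes edges incident to disallowed classes, so within the allowed region a transition survives precisely when both its endpoints are allowed, and this survival rule is symmetric in the two directions $x\to y$ and $y\to x$. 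Since for symmetric $\lambda$ the unconstrained chain has uniform stationary weight on each reachable class (Theorem~5 of \cite{dandekar2011liquidity}) --- essentially because $x\to y$ and $y\to x$ carry equal probability $\lambda_{xy}=\lambda_{yx}$ --- the detailed-balance relations with uniform weights still hold on the (possibly smaller) communicating class of allowed configurations, and the conclusion follows under the paper's standing assumption that the restricted chain is irreducible.

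The equivalence of cycle- and transaction-equivalence needs a little more. The easy direction --- cycle-equivalent implies transaction-equivalent, among reachable states --- I would prove by induction on the length of a transaction list: cycle-equivalent configurations share a score vector, so by unconstrained transaction-equivalence (Lemma~2 of \cite{dandekar2011liquidity}) the next transaction succeeds in the unconstrained sense from one iff from the other, by Route-Independence the two post-transaction configurations remain cycle-equivalent, and because the extra constrained-validity test depends only on the common score vector the constrained transactions succeed in lockstep; the induction hypothesis then applies to the new pair. The reverse direction is the crux: two reachable configurations with \emph{distinct} score vectors must be separated by some list of constrained transactions. I would reduce this to the separating construction in the proof of Lemma~2 of \cite{dandekar2011liquidity}, which distinguishes any two configurations of different score vectors; the extra work here is to keep such a witness inside the allowed region. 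For this I would use that single constrained transactions are reversible --- route $x\to y$ along $P$, then $y\to x$ along the reverse of $P$, retracing only allowed configurations and restoring every weight --- so that any two reachable configurations are joined within the allowed region by ``undo to the common ancestor, then redo'' lists; splicing such moves into the unconstrained separating argument, re-routed so every intermediate configuration lies in a reachable (hence allowed) class, should yield a constrained list that succeeds from one of $C_1,C_2$ but not the other.

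The main obstacle is precisely that last step. In the unconstrained network one may choose distinguishing transactions freely, whereas in the constrained network the witness must never leave the union of allowed cycle classes, so one has to argue that the score-vector discrepancy between $C_1$ and $C_2$ is always \emph{detectable by} legal transactions --- for instance by exhibiting a vertex that, from one configuration but not the other, can source (or absorb) a run of unit payments that would push its score out of, or hold it within, the permitted box. Parts~(1) and~(3), by contrast, are essentially bookkeeping once the Key Lemma is available.
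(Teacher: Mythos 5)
Your Key Lemma is exactly the paper's notion of a \emph{well-formed predicate} (a constraint depending only on the score vector, hence constant on cycle-equivalence classes); the paper proves the theorem as a special case of a general statement about such predicates. Your arguments for parts (1) and (3) match the paper's in substance: route-independence reduces to the unconstrained Theorem~3 of Dandekar et al.\ plus the fact that validity depends only on the target class, and the uniform stationary distribution follows because the transition matrix restricted to the allowed, mutually reachable classes is still symmetric with constant row sums (the paper phrases this as a symmetric stochastic matrix rather than detailed balance, but these are the same computation).

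The genuine gap is the reverse direction of part (2), and you correctly identify it as the crux but do not close it. Your plan --- splicing reversibility moves into the unconstrained separating construction of Lemma~2 of Dandekar et al.\ so that the witness stays in the allowed region --- runs into exactly the problem you name: a transaction sequence that would expose a score discrepancy at $v$ (say, by draining $v$'s score) necessarily raises the score of some counterparty $u$, which may hit its cap $c_u$ first, so the unconstrained witness need not survive the constraint. The paper's argument avoids this entirely and is much shorter: since $c_1$ and $c_2$ are \emph{mutually reachable}, there is a valid transaction list $\tau$ with $\tau(c_1)=c_2$; because $c_1\neq c_2$, $\tau$ strictly decreases some vertex's score, so the iterates $\tau^j(c_1)$ cannot all be valid (scores are bounded below by $0$), and if $k$ is maximal with $\tau^k(c_1)$ valid, then the list $\tau^k$ succeeds from $c_1$ but fails from $c_2$ (as $\tau^k(c_2)=\tau^{k+1}(c_1)$). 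That is, the connecting sequence itself, iterated until it first breaks, is the distinguishing witness --- no re-routing or adaptation of the unconstrained lemma is needed, and this is precisely where the reachability hypothesis enters. Without this idea your proof of part (2) is incomplete.
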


\begin{proof}

Theorem \ref{thm:simplepreds} follows directly from Theorem \ref{thm:predicates}, to be proved later.  The following section gives an intuitive demonstration.

\end{proof}

This theorem shows that independent restrictions on node behavior preserve most useful properties of credit networks.  The main property lost is the correspondence between forests and cycle-equivalence classes.  However, constraints can provide additional structure that often more than makes up for this loss.

But first, we give a constructive proof of Theorem \ref{thm:simplepreds} by showing that the individual node constraints can be modeled by a standard credit network.

Intuitively, an agent $v$ borrowing money is akin to routing flow to $v$ in the graph.  The maximum amount of flow that can be routed into $v$, then, is the min-cut of the graph that isolates $v$.  To demonstrate Theorem \ref{thm:simplepreds}, we give a construction of a network gadget, illustrated in figure \ref{fig:aggregates}, that separates each agent from the others with a small min-cut while preserving the rest of the graph.

Let $G^\prime$ be built from $G=(V,E)$ by, for each vertex $v\in V$, adding a ``fake'' vertex $F(v)$ connected only to $v$ by a new edge of capacity $c_v$.  The starting configuration of the network will be the starting configuration of $G$, and $w(v, F(v))=k_v$.  When agents $x$ and $y$ transact, they route transactions from $F(x)$ to $F(y)$. Then the score of $F(v)$ in $G^\prime$ is the score that $v$ would have in $G$ if agents had used $G$ instead of $G^\prime$.  Because every transaction involving a vertex $v$ runs through $(v, F(v))$, $v$ cannot lend more than $k_v$ in total.

\begin{figure}

\centering
\caption{A credit network with aggregate node constraints, implemented using gadgets}
\includegraphics[width=8.4cm]{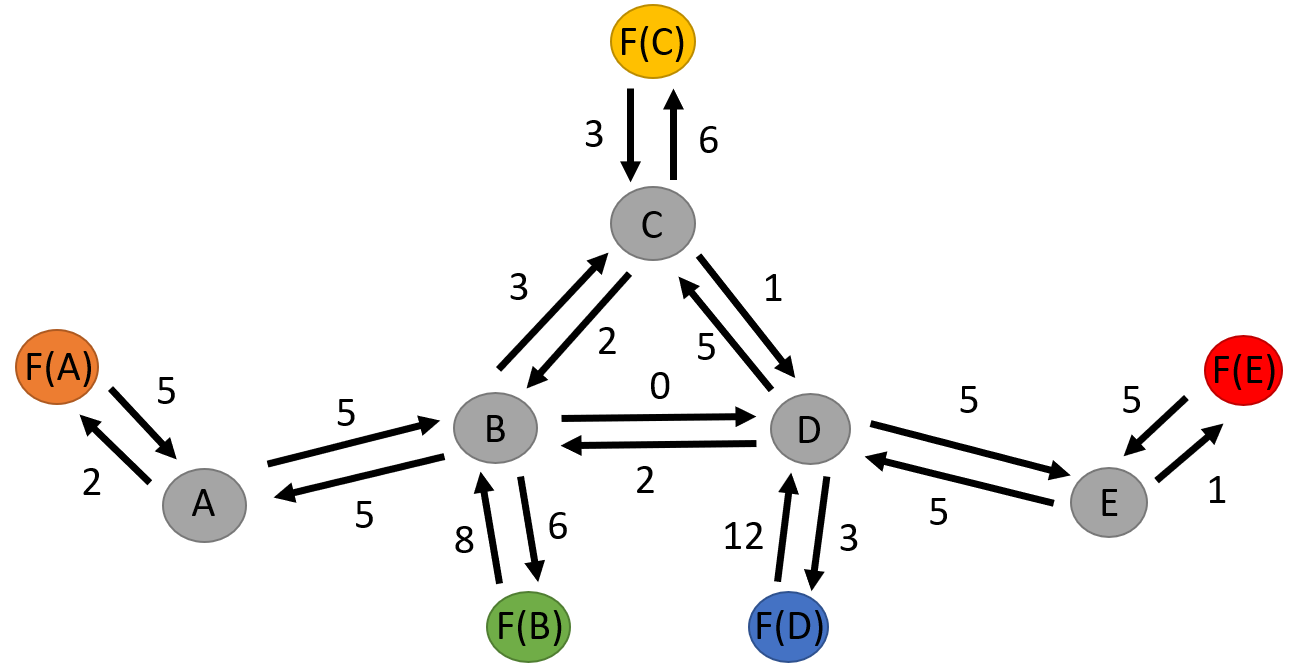}
\Description{A credit network where every vertex has an associated ``fake'' vertex connected only to it, which implements node constraints.}

\label{fig:aggregates}
\end{figure}

Observe that because no transactions originate from within $V$, in any collection of states reachable from each other, the score of every $v\in V$ is constant.  As such, conditioned on the choice of the starting configuration of the credit network, we can identify uniquely every reachable cycle-equivalent state with a score vector over only $F(V)$.  

Because this network operates exactly as a vanilla credit network in which half the vertices never perform transactions, the route-independence property and the property that transaction-equivalence is the same as cycle-equivalence on reachable states both hold. Additionally, a small modification of the proof of Theorem 5 of \cite{dandekar2011liquidity} shows that the Markov chain on this credit network starting at that start configuration is uniform over reachable score vectors.  

We now show that this combinatorial preservation is maintained in a more general, expressive notion of credit network constraint.

\subsection{Group Limits and Arbitrary Predicates}

Suppose that a business owner applies for a loan.  When assessing default risk, the lender would likely care about the individual's other lending or borrowing, as discussed previously.  However, a default of one member of a organization is likely correlated with the default of other members of that group.  As such, a lender might be concerned about the total borrowing of a group of agents, and might wish to impose an aggregate borrowing limit on the whole group.  Consider for example the credit network in \ref{fig:groups}, where agents in the blue box are not allowed to borrow too much in aggregate from the group outside the box.

Although we know of no network gadget for enforcing this property, satisfaction of the property can still be checked efficiently.

\begin{figure}

\caption{A credit network with an aggregate constraint on a group.  Here, the group in blue is not allowed to have its aggregate indegree, relative to vertices outside the group, exceed 12 (in this configuration, the aggregate indegree is 10).}
\centering
\includegraphics[width=8.5cm]{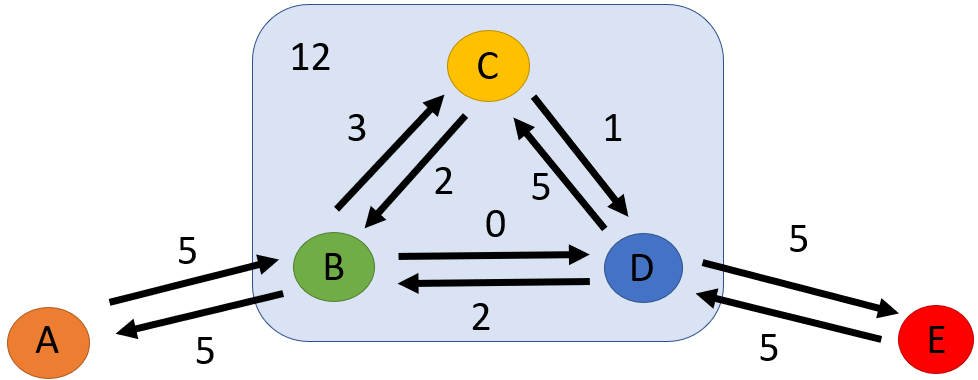}
\Description{A credit network where some nodes are grouped together by a group constraint.}

\label{fig:groups}
\end{figure}

More generally, we can study the dynamics of a credit network with broader lending restrictions imposed.  A network designer might like to require, for example, that agent $v_1$ can pay agent $v_2$ but only if it owes less than a certain amount to $v_3$.  

In fact, even when no gadget exists, any predicate that is well-defined on cycle-equivalence classes will preserve the properties in Theorem \ref{thm:simplepreds}.

\begin{definition}[Well-formed Predicate]
A Boolean predicate $P$ on configurations of a credit network is {\em well-formed} if, given cycle equivalent configurations $c_1, c_2$, $P(c_1)=P(c_2)$.  
\end{definition}

Note that the total amount that a group of nodes has borrowed from other nodes is invariant within a cycle-equivalent class.  Hence, restrictions on group aggregate borrowing, as in the above example, are well-formed predicates.  Of course, Boolean combinations of well-formed predicates are also well-formed predicates.\footnote{Not all ``natural'' predicates are strictly well-formed.   For example, ``A is willing to lend \$10 to B, but only if B's debt to C is less than \$5,'' as it relies on states of links that might vary within a cycle-equivalent class.  Such a constraint can be made well-formed by asking instead whether there exists a cycle-equivalent configuration satisfying the original constraint.  Informally, we conjecture that most constraints of interest fit within this framework and are computable with max-flow computations.} In fact, any Boolean function well-defined on the set of score vectors is a well-formed predicate.


\begin{theorem}
Given any well-formed predicate $P$ on the cycle-equivalence classes of a credit network, the following properties hold in the corresponding constrained credit network where $P$ is enforced:

\begin{enumerate}
\item Route Independence
\item Cycle-equivalence $\iff$ transaction-equivalence (for reachable cycle-equivalence classes)
\item Symmetric transaction distribution $\implies$ the stationary distribution of the induced Markov chain on reachable cycle-equivalence classes is uniform.
\end{enumerate}

\label{thm:predicates}
\end{theorem}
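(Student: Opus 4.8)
The plan is to reduce all three claims to statements about one combinatorial object and then inherit them from the cited unconstrained results. Since $P$ is well-formed it descends to a Boolean function on cycle-equivalence classes, so the set of $P$-satisfying configurations is a union of such classes. First I would form the directed, edge-labeled graph $\mathcal{H}$ whose vertices are the cycle-equivalence classes of $G$ (equivalently the score vectors, cf.\ \cite{kleitman1981forests}) and which has an edge $A \xrightarrow{(x,y)} B$ whenever routing one unit from $x$ to $y$ out of any configuration in $A$ produces a configuration in $B$; Route Independence (Theorem 3 of \cite{dandekar2011liquidity}) makes the target $B$ well-defined. Two quick observations drive everything else: (i) transactions are \emph{reversible}, i.e.\ $A \xrightarrow{(x,y)} B$ implies $B \xrightarrow{(y,x)} A$, because the augmenting path that realized the transaction has capacity at least $1$ on every edge when traversed backwards afterwards; and (ii) the constrained credit network is exactly the walk on the induced subgraph $\mathcal{H}_P$ spanned by the $P$-satisfying classes, where an attempted transaction is executed when its $\mathcal{H}$-target lies in $\mathcal{H}_P$ and is a no-op otherwise.

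Given this, claims (1) and (3) are short. Route Independence holds because, starting from a class in $\mathcal{H}_P$, a transaction from $x$ to $y$ has the route-independent $\mathcal{H}$-target, and that target either satisfies $P$ (so the transaction is performed) or does not (so it is rejected), with no dependence on the route. For (3), when $\lambda_{xy}=\lambda_{yx}$ the one-step transition matrix on $\mathcal{H}$ is symmetric: reversibility plus the label swap $(x,y)\leftrightarrow(y,x)$ give a rate-preserving bijection between the transactions carrying $A$ to $B$ and those carrying $B$ to $A$ (this is what underlies Theorem 5 of \cite{dandekar2011liquidity}). Passing to $\mathcal{H}_P$ only converts some transitions that leave $\mathcal{H}_P$ into self-loops, and it does so symmetrically --- it deletes a transition $A\to B$ precisely when $B\notin\mathcal{H}_P$ --- so the restricted chain stays symmetric, hence doubly stochastic, hence has the uniform distribution as a stationary distribution; reversibility also makes reachability on $\mathcal{H}_P$ a symmetric relation, so the configurations reachable from the start form a single communicating class on which (with the self-loops giving aperiodicity) the chain is irreducible and the uniform distribution is the unique stationary one.

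Claim (2) is the substantive one. The direction \emph{cycle-equivalence $\Rightarrow$ transaction-equivalence} is routine: along any transaction list the sequence of classes visited from a configuration $C$ depends only on $[C]$ (Route Independence), and at each step both ``the transaction is possible'' (Lemma 2 of \cite{dandekar2011liquidity} in the unconstrained network) and ``the resulting class satisfies $P$'' are functions of the current class only, so cycle-equivalent configurations accept exactly the same lists. The hard part will be the reverse direction: two configurations that are transaction-equivalent \emph{and both reachable} from the common start must be cycle-equivalent. I would argue the contrapositive. If two reachable classes have distinct score vectors, choose $v$ with (say) $S(C_1)_v > S(C_2)_v$; the goal is a $P$-valid transaction list that completes from $C_1$ but not from $C_2$, because it attempts to draw more total payment out of $v$ than the at most $S(C_2)_v$ units that $v$ can ever pay out starting from $C_2$. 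The obstacle is that repeatedly draining $v$ need not stay inside $\mathcal{H}_P$; to get around this I would route $v$'s outgoing payments along the valid ``undo'' walk from $C_1$ back to the start configuration --- which exists by reachability together with reversibility, and which by construction visits only $P$-satisfying classes --- and show that along that walk $S_v$ must drop to at most $\min(S(C_2)_v, k_v)$ at some point, so the corresponding prefix of the list is $P$-valid from $C_1$ yet over-draws $v$ from $C_2$. Making this comparison airtight, in particular handling the cases $S(C_2)_v \lessgtr k_v$ and the fact that $S_v$ need not move monotonically along the undo walk, is where I expect the real effort to go; structurally it is the argument of Lemma 2 of \cite{dandekar2011liquidity} with the extra bookkeeping that every intermediate class must satisfy $P$.
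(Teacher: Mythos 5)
Your parts (1) and (3) and the forward direction of (2) are correct and essentially identical to the paper's proof: route independence follows because a well-formed predicate only sees the (route-independent) target class, and the uniform stationary distribution follows from the symmetric, doubly stochastic transition matrix exactly as in Theorem 5 of \cite{dandekar2011liquidity}, with restriction to $P$-valid classes converting transitions symmetrically into self-loops.

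The gap is in the reverse direction of (2), and you correctly flag it as the place ``where the real effort will go'' --- but the draining construction you sketch does not close it. Two problems: first, the undo walk from $C_1$ back to the start configuration only guarantees that $S_v$ ends at $k_v$, and there is no reason $k_v \leq S(C_2)_v$ (both $C_1$ and $C_2$ are reachable from the start, so $S(C_2)_v$ can lie on either side of $k_v$); second, your bound ``$v$ can pay out at most $S(C_2)_v$ units starting from $C_2$'' only holds for lists consisting solely of payments out of $v$, whereas a general walk replenishes $v$'s score, so executing the same walk from $C_2$ need not over-draw $v$ at all. The paper's argument sidesteps all of this with a much simpler device: let $\tau$ be the $P$-valid transaction sequence taking $c_1$ to $c_2$ (it exists by the reachability hypothesis), and \emph{iterate it}. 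Since $c_1 \neq c_2$ as classes, $\tau$ strictly decreases some vertex's score by a fixed positive amount per application, and scores are bounded below by $0$, so there is a largest $k$ for which $\tau^k$ executes from $c_1$. Then $\tau^k$ succeeds from $c_1$ but fails from $c_2$, because $\tau^{k}(c_2) = \tau^{k+1}(c_1)$ and $\tau^{k+1}$ fails from $c_1$ while its first block $\tau$ succeeds. This handles every failure mode (capacity or predicate) uniformly and requires no case analysis on $k_v$ versus $S(C_2)_v$. You should replace your draining argument with this iteration.
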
 

\begin{proof} See Appendix A. \end{proof}
\section{Liquidity Analysis}

\label{sec:liquidityanalysis}
It now remains to study the impact that constraints have on liquidity of credit networks.  In particular, we will focus on credit networks with node constraints.  We also show an interesting combinatorial difference between constrained and unconstrained networks.  For all theorems in this section, transaction rates will be symmetric and that a unique stationary distribution over cycle-equivalence classes exist (that is, there are not disjoint sets of vertices that never transact with each other), so the distribution over cycle-equivalence classes is uniform (Theorem 5, \cite{dandekar2011liquidity})

Before beginning, note that if an arbitrary predicate can be evaluated efficiently, then liquidity can be estimated experimentally by simply simulating a sequence of random transactions, as in the Markov chain used to define liquidity.  The authors of \cite{dandekar2011liquidity} used such simulations to conjecture liquidity in analytically intractable classes of (unconstrained) credit networks; the caveat here is that these Markov chains lack general but meaningful mixing time bounds.

\subsection{Trees}

Liquidity can be exactly computed in credit networks that have a simple structure.  In particular, when subjected to node constraints, a natural dynamic programming algorithm computes liquidity in a node-constrained tree.

\begin{theorem}
\label{thm:tree}
In a tree with node constraints, liquidity can be computed in time polynomial in the size of the graph and in the maximum capacity along an edge.
\end{theorem}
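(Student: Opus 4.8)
The plan is to reduce the question to counting lattice points. On a tree there are no cycles, so every configuration is its own cycle-equivalence class, and a configuration is in fact determined by its score vector: sweeping from the leaves inward, a leaf $v$ with parent $p$ has $S_v=w(v,p)$, which recovers $w(p,v)=c(\{v,p\})-S_v$, and each further vertex's remaining residual capacities are then recovered from partial sums of scores. Hence the liquidity between $x$ and $y$ is exactly the fraction of \emph{admissible} score vectors --- those consistent with every per-edge capacity and with $S_v\le c_v$ for all $v$ --- in which $x$ can pay $y$, and it suffices to count these two finite sets.

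To organize the count, root the tree at the payer $x$. For an edge $e$ between a parent $p$ and child $c$, use the variable $z_e:=w(p,c)\in\{0,\dots,c(e)\}$, so that $w(c,p)=c(e)-z_e$; a configuration is precisely a choice of one $z_e$ per edge. Then $S_v=(c(e_{vp})-z_{e_{vp}})+\sum_{c\text{ child of }v}z_{e_{vc}}$ for $v\ne x$ (and $S_x=\sum_c z_{e_{xc}}$), so the constraint $S_v\le c_v$ is the single inequality $\sum_{c}z_{e_{vc}}\le c_v-c(e_{vp})+z_{e_{vp}}$, an upper bound on the child-edge variables that is affine in the parent-edge variable; at the root it is a fixed budget. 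Because the $x$--$y$ path is unique, ``$x$ can pay $y$'' is just the box condition that each path edge have residual capacity at least one in the payment direction, i.e.\ a restriction of the range of $z_e$ on every $x$--$y$ path edge.

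First I would set up the bottom-up dynamic program. For each vertex $v$ define $f_v(z)$ = the number of assignments of the $z$-variables strictly inside the subtree at $v$ that satisfy every node constraint at $v$ and its descendants when $z_{e_{vp}}=z$ (and, in the restricted run, the box condition on any path edges in that subtree). The recursion is a truncated convolution: form the generating polynomial $g_c(Z)=\sum_{z}f_c(z)Z^{z}$ for each child $c$ (summed over the allowed range of $c$'s parent-edge variable), multiply them to get $G(Z)=\prod_c g_c(Z)$, and set $f_v(z)=\sum_{s\le B_v(z)}[Z^s]G(Z)$, where $B_v(z)=c_v-c(e_{vp})+z$ is the budget for $\sum_c z_{e_{vc}}$; the root is identical with the fixed budget $c_x$. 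Running this once with all edges unrestricted gives the number $N$ of admissible configurations, and once with the $x$--$y$ path edges range-restricted gives the number $N'$ of admissible configurations in which $x$ can pay $y$, so the liquidity is $N'/N$. For the running time, each polynomial has degree at most the sum of its edge capacities, the multiplications at $v$ cost $O\big((\sum_c c(e_{vc}))^2\big)$, and summed over all $v$ this telescopes to $O\big((\sum_e c(e))^2\big)$, polynomial in the number of edges and the maximum capacity $W$; the prefix-sum evaluations of $f_v$ are cheaper.

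The step I expect to be the real obstacle is not the DP but identifying the set being counted with the support of the stationary distribution. Liquidity is defined as uniform over the cycle-equivalence classes \emph{reachable} from the start configuration, so I must show that on a tree every admissible score vector is reachable. I would prove this by a short exchange argument: all admissible score vectors share the total $\sum_e c(e)$ (since $\sum_v S_v=\sum_v\sum_u w(v,u)=\sum_e c(e)$ for any configuration), so if $S\ne S'$ there is a surplus vertex and a deficit vertex, and routing a single unit along a \emph{shortest} correcting path between them stays within every node constraint and strictly decreases the discrepancy potential $\sum_e |w_S(e)-w_{S'}(e)|$. The only delicate point is ruling out that this shortest correcting path is blocked by an exhausted edge; that follows from minimality together with the cut interpretation of partial score sums. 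Getting this reachability lemma right --- rather than the convolution bookkeeping, which is routine --- is where the care is needed.
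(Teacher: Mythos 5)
Your proposal is essentially the paper's proof: the same rooted bottom-up dynamic program that indexes subtree counts by the state of the parent edge, convolves over children, filters by the node constraint on each vertex's score, and computes the numerator by restricting the orientation of edges on the unique $x$--$y$ path (your generating-polynomial phrasing is the paper's $D_i$ recurrence in different notation). The one substantive addition is your reachability lemma identifying admissible score vectors with the support of the stationary distribution --- a point the paper leaves implicit --- and your exchange-argument sketch for it is plausible but would need the "maximal correcting path is unblocked" step written out to be complete.
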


In fact, this algorithm extends to graphs that are close to being tree-like, in the sense of having low treewidth \cite{bodlaender1998partial}.  \footnote{For a detailed explanation of a similar algorithm, see \cite{noble1998evaluating}}

\begin{theorem}
Suppose a graph $G=(V, E)$ with node constraints has tree-width $k$, and let $S=\max_v \sum_{u\in \Gamma(v)}c(v,u)$.  There exists an algorithm to compute liquidity in time $poly(\vert V\vert, S^k, 2^{k^2})$.  
\label{thm:treewidth}
\end{theorem}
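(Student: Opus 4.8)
The plan is to generalize the tree dynamic program of Theorem~\ref{thm:tree} to a dynamic program over a nice tree decomposition of $G$. The first step is to rephrase liquidity purely combinatorially. By the third property of Theorem~\ref{thm:simplepreds}, under symmetric rates the stationary distribution is uniform over the reachable cycle-equivalence classes, and each class is identified with its score vector; moreover, applying the gadget construction of Section~3.1 turns the node-constrained network into an ordinary credit network $G'$ in which the original vertices never transact, so the reachable classes correspond exactly to the realizable score vectors $S$ of $G$ with $S_v\le c_v$ for all $v$. Hence the liquidity between $x$ and $y$ equals $N_{xy}/N$, where $N$ counts such score vectors and $N_{xy}$ counts those from whose class $x$ can pay $y$; by route independence the latter is a well-defined class invariant.

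Next I would express both counts as a Tutte-polynomial-type evaluation amenable to bounded-treewidth dynamic programming (cf.\ \cite{noble1998evaluating}). Viewing each capacity-$c(e)$ edge as $c(e)$ parallel edges gives a multigraph $H$ with $O(|V|S)$ edges and treewidth $k$, and by Kleitman--Winston \cite{kleitman1981forests} the cycle-equivalence classes of $G$ correspond bijectively to the forests of $H$, with the class where $x$ can pay $y$ corresponding (Proposition~2.1 of \cite{goel2015connectivity}) to the forests in which $x$ and $y$ lie in one component. So $N$ and $N_{xy}$ are obtained by counting forests of $H$ whose associated score vector satisfies the node constraints (for $N_{xy}$, additionally requiring $x$ and $y$ in a single component). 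Fix a nice tree decomposition of $H$ of width $O(k)$ (constructible within the time budget, e.g.\ via a constant-factor approximation in time $2^{O(k)}|V|$); the DP state at a bag records (i) the partition of the bag vertices induced by the partial forest, of which there are $2^{O(k^2)}$, and (ii) for each bag vertex a partial score in $\{0,\dots,S\}$, giving $S^{O(k)}$ choices. Introduce/forget/join transitions update the partition as in standard spanning-forest counting and update the partial scores; at a forget node for $v$ one discards states violating $S_v\le c_v$, and for $N_{xy}$ one also tracks whether $x$ and $y$ have been merged. Each bag does $\mathrm{poly}(S^k,2^{k^2})$ work over $\mathrm{poly}(|V|,S)$ bags, for total time $\mathrm{poly}(|V|,S^k,2^{k^2})$.

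The hard part will be arranging step~(ii) so that each score vector is counted exactly once. A DP that instead branched over each edge's orientation would be simpler, but several configurations realize the same class and the multiplicity varies across classes, so such a DP would overcount $N$ and $N_{xy}$. Working with the canonical forest of each class via \cite{kleitman1981forests} avoids this, but one must check that the Kleitman--Winston correspondence can be evaluated incrementally along an edge ordering compatible with the tree decomposition and, crucially, that the score $S_v$ of the class is recoverable as a sum of bounded per-bag contributions local to $v$ --- this is what makes the constraint $S_v\le c_v$ checkable in the DP, and is where the real work lies. (An alternative is to DP directly over configurations but restrict to a per-class canonical representative, e.g.\ the lexicographically-least orientation, and verify canonicity locally within each bag; the same localization question arises.) Once this is in place, correctness of the transitions, and of the bound on $N_{xy}$ via final-component membership --- justified by a Menger-type argument that a residual $x$--$y$ path, if it exists in the class, can be routed across each separator of the decomposition consistently with the recorded partition --- are routine. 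Specializing to $k=1$ recovers Theorem~\ref{thm:tree}; there the subtlety disappears, since on a tree a score vector determines its configuration uniquely.
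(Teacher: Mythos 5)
Your high-level architecture (nice tree decomposition, states of the form $S^{O(k)}\cdot 2^{O(k^2)}$ recording partial scores plus connectivity on the bag, constraint checks at forget nodes) matches the paper's, but the route you take to the crucial issue you yourself flag --- counting each cycle-equivalence class exactly once --- is a genuine gap. You propose to count forests of the capacity-expanded multigraph via the Kleitman--Winston bijection and to filter them by the node constraints on the ``associated score vector.'' That requires exactly what you defer as ``where the real work lies'': that the Kleitman--Winston correspondence can be evaluated incrementally and that the score $S_v$ of the class attached to a forest decomposes into per-bag local contributions. The paper explicitly identifies this as an \emph{open problem}: its Future Work section notes that constraints ``eliminate the correspondence between the forests and score vectors,'' that the bijection of \cite{kleitman1981forests} is algorithmic, and that ``other than in tree-like graphs, an analysis of the constrained credit network using forests is not obvious.'' So the step you leave unresolved is not a routine verification; it is the missing idea, and your fallback (a locally verifiable canonical orientation per class) faces the same unresolved localization question. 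Relatedly, your use of Proposition 2.1 of \cite{goel2015connectivity} (forests with $x,y$ in one component) imports an unconstrained-network fact into the constrained setting where the forest correspondence is precisely what breaks.

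The paper avoids forests entirely by exploiting the fact that a cycle-equivalence class \emph{is} its score vector. Its DP state at a bag is the restriction of the score vector to the active vertices, with multiplicity equal to the number of distinct full score vectors of the processed subtree consistent with it (edges between active vertices deliberately excluded from the partial scores). At a forget node it branches over all orientations of the edges incident to the forgotten vertex, then \emph{de-duplicates the resulting score vectors} before summing multiplicities across distinct sources --- this is exactly the mechanism that prevents the overcounting you worry about, since two orientations yielding the same score vector lie in the same class by definition. Liquidity is then obtained by augmenting the state with a directed reachability graph on the bag (updated at forget/join nodes), rather than with forest-component partitions. If you replace your forest-counting step with this direct score-vector enumeration plus de-duplication, the rest of your argument goes through and the claimed running time follows.
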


\begin{proof} See Appendices \ref{prf:tree} and \ref{prf:treewidth}\end{proof}

\subsection{Star Graph}

Moving from liquidity computation to liquidity analysis, consider as a first example the class of star graphs, where every edge runs from an external vertex $v_i$ to a the central vertex $u$ with capacity $c_i$.  This type of graph will be useful later, when we show that nontrivial node constraints make constrained credit networks on arbitrary graphs functionally equivalent to a constrained star.

Without loss of generality, observe that the edge nodes need no extra constraint, as any extra constraint is equivalent to a decrease in capacity on the associated edge.  Similarly, the center vertex can be constrained to perform no transactions whatsoever, if we add an extra outside vertex and have any transactions involving the center go to this vertex, as in figure \ref{fig:aggregates}.

\begin{theorem}

When a star graph is constrained such that its central vertex $u$ has score $\lfloor s_u=\Sigma_{i}c_i/2\rfloor$, the steady-state failure probability between any two vertices $i$ and $j$ is at most $4/(c_i+c_j)$.  Moreover, the steady-state failure probability is at least $2/(c_i+c_j+2)$.  

\label{thm:star}

\end{theorem}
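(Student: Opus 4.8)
The plan is to turn the liquidity question into a lattice‑point count and then into a short inequality about unimodal sequences. First, using the reductions noted just above the statement, assume the center $u$ performs no transactions. Then by Route Independence and Theorem~\ref{thm:simplepreds}(3) the stationary distribution is uniform over the reachable cycle‑equivalence classes, and since $u$ is inert its score stays pinned at $s_u=\lfloor(\sum_i c_i)/2\rfloor$. A class is determined by its score vector, and here that vector is fixed by the leaf scores $t=(t_1,\dots,t_n)$ with $t_k\in\{0,\dots,c_k\}$; because $S_u=\sum_k w(u,v_k)=\sum_k(c_k-t_k)$ is pinned, we get $\sum_k t_k=M$ where $M=(\sum_i c_i)-s_u=\lceil(\sum_i c_i)/2\rceil$. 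Every leaf‑to‑leaf transaction moves one unit between two of the $t_k$, and such moves connect all integer points of the box‑slice $\mathcal P=\{t:0\le t_k\le c_k,\ \sum_k t_k=M\}$, so the reachable classes are exactly $\mathcal P\cap\mathbb Z^n$ taken uniformly, and $N:=|\mathcal P\cap\mathbb Z^n|$ is the quantity to understand.

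Next I would pin down the failure event. Since $v_i$ and $v_j$ are joined only through $u$, the payment from $v_i$ to $v_j$ succeeds iff $w(v_i,u)\ge1$ and $w(u,v_j)\ge1$, i.e.\ (using $w(v_i,u)=t_i$ and $w(u,v_j)=c_j-t_j$) iff $t_i\ge1$ and $t_j\le c_j-1$; it fails iff $t_i=0$ or $t_j=c_j$, and the opposite direction is symmetric. The key step is a fiberwise count: fixing $\tau=(t_k)_{k\ne i,j}$ and writing $\sigma=\sum_{k\ne i,j}t_k$, $m=M-\sigma$, conditioned on $\tau$ the pair $(t_i,t_j)$ is uniform over its $L(m):=|\{(a,b):a+b=m,\ 0\le a\le c_i,\ 0\le b\le c_j\}|$ completions, and \emph{exactly one} of these is a failure state (the feasible one among $t_i=0$ and $t_j=c_j$, which coincide iff $m=c_j$). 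So the conditional failure probability is exactly $1/L(m)$, and summing over fibers gives $\Pr[\text{fail}]=\#\{\text{nonempty fibers}\}/N$, with $\#\{\text{nonempty fibers}\}=\sum_{\sigma\in W}g(\sigma)$, $N=\sum_{\sigma\in W}g(\sigma)L(M-\sigma)$, where $g(\sigma):=\#\{\tau:\sum_{k\ne i,j}t_k=\sigma\}$ and $W$ is the range of feasible $\sigma$.

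The lower bound is then immediate: every fiber has at most $\min(c_i,c_j)+1\le(c_i+c_j+2)/2$ states, so $N\le(\min(c_i,c_j)+1)\#\{\text{fibers}\}$ and $\Pr[\text{fail}]\ge 1/(\min(c_i,c_j)+1)\ge 2/(c_i+c_j+2)$. For the upper bound I would use that $g$ and the fiber‑size sequence are both unimodal and peak at essentially the same place. By the reversal symmetry $t_k\mapsto c_k-t_k$ on the coordinates $k\ne i,j$, $g$ is symmetric and unimodal about $\sigma=D/2$ with $D:=\sum_{k\ne i,j}c_k$; and $L(M-\sigma)$, a trapezoid in $\sigma$ of height $\min(c_i,c_j)+1$, has its flat top centered at $\sigma=M-(c_i+c_j)/2=D/2+O(1)$. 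Hence on $W$ the sequences $g(\sigma)$ and $L(M-\sigma)$ are similarly ordered, and Chebyshev's sum inequality gives $N\ge\frac{1}{|W|}\bigl(\sum_{\sigma\in W}g(\sigma)\bigr)\bigl(\sum_{\sigma\in W}L(M-\sigma)\bigr)$, so $\Pr[\text{fail}]\le |W|/\sum_{\sigma\in W}L(M-\sigma)$. With $|W|\le c_i+c_j+1$ and the identity $\sum_{m=0}^{c_i+c_j}L(m)=(c_i+1)(c_j+1)$ this yields $\Pr[\text{fail}]\le(c_i+c_j+1)/((c_i+1)(c_j+1))$, which is at most $4/(c_i+c_j)$ whenever $c_i$ and $c_j$ are not too lopsided (in particular when $c_i=c_j$, the case relevant to the uniformly node‑constrained applications in this paper).

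The main obstacle is this upper bound, and two technical points need care. First, the boundary case in which $W$ is truncated because $v_i$ and $v_j$ together hold more than half the total capacity: there the fibers cluster even more tightly at the flat top, so the bound only improves, but one must redo the Chebyshev bookkeeping with the truncated $W$ and $\sum_{\sigma\in W}L(M-\sigma)$ rather than the clean identity. Second, the $O(1)$ misalignment between the mode of $g$ and the center of the trapezoid's flat top, coming from the floors and ceilings, which must be absorbed when invoking similar ordering. A more robust but less elementary alternative is to prove a local‑limit estimate showing $\sigma$ concentrates near $D/2$, so that $\Pr[\text{fail}]\approx 1/L(M-D/2)$ directly; I would attempt the Chebyshev route first since it keeps everything exact and purely combinatorial.
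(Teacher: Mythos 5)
Your setup is essentially the paper's: both arguments fix the configuration of the network away from $v_i,v_j$, observe that the remaining freedom is the split of $\Delta=t_i+t_j$ across the two edges with exactly one failing split per class, and then control the resulting ratio with Chebyshev's sum inequality. Your lower bound is complete and, if anything, cleaner than the paper's: each nonempty fiber contains at most $\min(c_i,c_j)+1\le (c_i+c_j+2)/2$ states and exactly one failure state, which gives $2/(c_i+c_j+2)$ directly.

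The gap is in the upper bound, and you have already put your finger on it. The paper does not pair $g(\sigma)$ with $L(M-\sigma)$ over the whole window; it conditions on the half-range $\Delta\le T=(c_i+c_j)/2$ (and, by symmetry, its mirror image), where the rest-count $B_\Delta$ is monotone increasing (by unimodality of $g$ about $D/2$) and the fiber size is taken to be $\Delta+1$. Chebyshev then applies to two genuinely monotone sequences and yields $1/p\ge \frac{1}{T+1}\sum_{\Delta=0}^{T}(\Delta+1)=(T+2)/2\ge T/2$, i.e.\ $p\le 4/(c_i+c_j)$ on each half. Your global pairing costs you exactly the factor you observe: $|W|/\sum_{\sigma}L(M-\sigma)=(c_i+c_j+1)/((c_i+1)(c_j+1))$, which exceeds $4/(c_i+c_j)$ once $\min(c_i,c_j)$ is small relative to $c_i+c_j$, so the stated bound is not established. (Also, ``unimodal with aligned peaks'' does not by itself give similar ordering for Chebyshev; you need both sequences symmetric about the same center, which is precisely where your $O(1)$ misalignment worry bites.) You should not be too troubled by the lopsided case, however: the paper's own count of ``$\Delta+1$ completions for all $\Delta\le T$'' is only valid when $\Delta\le\min(c_i,c_j)$, and the theorem's upper bound is in fact false for very unbalanced capacities --- a two-leaf star with $c_i=1$, $c_j=100$, $s_u=50$ has exactly two reachable classes and failure probability $1/2\gg 4/101$. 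In the balanced regime where the theorem is actually applied (all capacities equal in the expander corollary), replacing your global Chebyshev step with the paper's half-range conditioning closes the argument.
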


\begin{proof}See Appendix \ref{prf:star}\end{proof}

In an unconstrained star, the failure probability is $1/(c_i+1) + 1/(c_j+1) - 1/((c_i+1)(c_j+1))$, so constraining the star has not significantly reduced liquidity.

\subsection{Expander Graphs}
Liquidity is intuitively related to the edge-expansion of the underlying graph.  After all, route-independence means that transactions in a credit network are akin to single-commodity flows in a directed graph.  Such a flow will fail if it hits a bottleneck, such as if the source or destination is in a poorly-connected set of vertices, or if the graph's min-cut is small.  

More specifically, a transaction of size $k$ from $u$ to $v$ will fail if and only if there is some cut of the graph separating $u\in A\subset V$ from $v\in B\subset V$ such that $k$ units of flow cannot move from $A$ to $B$.  From another viewpoint, the collective capacity of the edges from $A$ to $B$ gives an implicit group constraint on $A$ (and $B$).  A predicate constraint on $A$ will thus not change network behavior unless it supersedes this implicit constraint.  The interesting parameter regimes for analysis, therefore, are those where many sets of nodes have superseding group constraints. 

Rather than study arbitrarily-valued group constraints on every subset of vertices, we will study node constraints; constraints on single vertices collectively give constraints on every set of vertices.  As such, one interesting parameter regime for node constraints is where the range of a node's net balance is decreased from its degree to below the edge expansion of the graph.  

Let $h(G)$ be the edge expansion of a graph $G=(V,E)$.  We will take here the edge expansion of a graph to mean \\$\min_{S\subset V : 0<\vert S\vert \leq \vert V\vert /2} \partial(S)/\vert S\vert$, where $\partial(S)$ is the total capacity of edges leaving $S$.  Let $d(v)$ be the weighted degree of a vertex $v$.

\begin{theorem}
Let $G=(V, E)$ be a credit network, and for each $v\in V$, let $0\leq r(v)\leq h(G)$ be some integer.  If every node's score is restricted to lie between $(d(v) - r(v))/2$ and  $(d(v) + r(v))/2$, then the constrained credit network is equivalent to the star credit network $H$ where the central node $u$ has a fixed score of $S_u=\Sigma_{v\in V}r(v)/2$ and $c(v_i, u)=r(v)$.\footnote{This theorem requires two minor technical restrictions.  First, that $r(v)$ is even if and only if $d(v)$ is even; if $r(v)$ is odd but $d(v)$ is even, reducing $r(v)$ by one does not change any behavior of the credit network.  Second, that $\Sigma_{v\in V}r(v)$ is divisible by $2$.}  
\label{thm:expander}
\end{theorem}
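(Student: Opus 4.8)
The plan is to prove that the node-constrained network $G$ and the star $H$ carry isomorphic families of reachable cycle-equivalence classes, in a way that respects both the transaction dynamics and the predicate ``$x$ can pay $y$'', and then to invoke the fact that liquidity is merely the uniform average of that predicate over those classes (Theorem~\ref{thm:predicates}; Theorem~5 of~\cite{dandekar2011liquidity}). Note first that the two-sided constraint $(d(v)-r(v))/2 \le S_v \le (d(v)+r(v))/2$ is defined on score vectors, hence is well-formed, so Theorem~\ref{thm:predicates} applies and both networks have a uniform stationary distribution over reachable cycle-equivalence classes.

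First I would set up a dictionary between score vectors. For a configuration $C$ of the constrained $G$, put $T_v := S_v(C) - (d(v)-r(v))/2$; the parity hypothesis makes $d(v)-r(v)$ even, so each $T_v$ is a nonnegative integer, the node constraint reads exactly $0 \le T_v \le r(v)$, and since $\sum_v S_v = \tfrac12\sum_v d(v)$ is a configuration invariant, $\sum_v T_v = \tfrac12\sum_v r(v)$ — the score prescribed for the center $u$ of $H$. Dually, a configuration of $H$ is pinned down by the values $w(v_i,u) \in \{0,\dots,r(v)\}$, and it realizes $S_u = \tfrac12\sum_v r(v)$ exactly when these sum to $\tfrac12\sum_v r(v)$. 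So $T_v \leftrightarrow w(v_i,u)$ matches feasible score vectors of constrained-$G$ with feasible configurations of $H$, hence (score vectors indexing cycle-equivalence classes) with the classes of constrained-$G$ themselves. I would start $G$ and $H$ at corresponding configurations (e.g.\ neutral states $S_v = d(v)/2$, $w(v_i,u)=r(v)/2$, when degrees are even); the precise choice is immaterial since, as the last step shows, the reachable classes fill the whole feasible region.

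The heart of the argument — and what I expect to be the main obstacle — is a cut lemma asserting that in constrained-$G$ the node constraints are the \emph{only} obstruction to a payment: a transaction $x \to y$ succeeds out of $C$ if and only if $S_x(C) > (d(x)-r(x))/2$ and $S_y(C) < (d(y)+r(y))/2$, i.e.\ iff $S(C) - e_x + e_y$ is still feasible. Necessity is immediate. For sufficiency I would use max-flow/min-cut and show every cut $(A,B)$ with $x\in A$, $y\in B$ has residual capacity $f_{A\to B} := \sum_{a\in A, b\in B} w(a,b) \ge 1$. Writing $\mathrm{IntCap}(A)$ for the capacity internal to $A$, the identities $\sum_{a\in A} S_a = \mathrm{IntCap}(A) + f_{A\to B}$ and $\sum_{a\in A} d(a) = 2\,\mathrm{IntCap}(A) + \partial(A)$, combined with $S_a \ge (d(a)-r(a))/2$ and the strict bound $S_x \ge (d(x)-r(x))/2 + 1$, give $f_{A\to B} \ge \partial(A)/2 - \tfrac12\sum_{a\in A} r(a) + 1$; when $|A| \le |V|/2$, edge expansion yields $\partial(A) \ge h(G)|A| \ge \sum_{a\in A} r(a)$ (each $r(a)\le h(G)$), so $f_{A\to B}\ge 1$. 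When $|A| > |V|/2$ I would run the mirror computation on $B$ (using $S_b \le (d(b)+r(b))/2$ and $S_y \le (d(y)+r(y))/2 - 1$) to again get $f_{A\to B} \ge 1$. Route-independence makes the path irrelevant, and the path's interior vertices keep their scores, so they stay feasible. The delicate points are keeping the halved quantities integral via the parity hypotheses, extracting the needed ``$+1$'' from the strict inequality at the correct endpoint, and handling both sides of the cut.

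Finally I would transfer everything to $H$ and conclude. In $H$ a payment $v_i \to v_j$ must route $v_i \to u \to v_j$, which succeeds iff $w(v_i,u) \ge 1$ and $w(u,v_j)\ge 1$ — exactly the cut-lemma condition transported through the dictionary — so the induced transition relations on feasible score vectors coincide. The set of score vectors reachable from the start is, in both networks, the entire feasible region: it is the set of integer points of a box cut by the hyperplane $\sum_v T_v = \tfrac12\sum_v r(v)$, which is connected under the unit moves $T \mapsto T + e_j - e_i$ that stay in the box (given feasible $T \ne T'$ with equal sums, pick $i$ with $T_i > T'_i$ and $j$ with $T_j < T'_j$; the move is legal and decreases $\|T - T'\|_1$), and each legal unit move is realized by a legal transaction by the cut lemma. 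For every reachable score vector, ``$v_i$ can pay $v_j$'' has the same truth value on both sides (min-cut $\ge 1$ by the cut lemma in $G$; a path through $u$ in $H$). Hence, averaging this indicator against the uniform distribution over reachable cycle-equivalence classes (Theorem~\ref{thm:predicates}; Theorem~5 of~\cite{dandekar2011liquidity}), the two networks have identical pairwise liquidities, which is the asserted equivalence.
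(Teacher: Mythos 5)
Your proposal is correct and follows essentially the same route as the paper: the core step in both is the cut-counting argument showing that, because $\partial(A)\ge h(G)\lvert A\rvert \ge \sum_{a\in A}r(a)$, no graph cut can be saturated without violating the aggregate of the node constraints, so the only binding obstructions are the per-vertex score bounds, which are exactly the star $H$. Your write-up is somewhat more complete than the paper's (explicitly handling both sides of the cut, the $+1$ from the strict endpoint inequality, and the reachability of all feasible score vectors via unit moves), but these are refinements of the same argument rather than a different approach.
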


\begin{corollary}
When $r(v)=\lfloor h(G)\rfloor $ for all $v$, a constrained expander graph has, for every pair of vertices, liquidity between $1-1/(\lfloor h(G)\rfloor + 1)$ and $1-2/\lfloor h(G)\rfloor$.

\end{corollary}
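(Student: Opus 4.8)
The plan is to obtain the corollary by simply composing Theorem~\ref{thm:expander} with Theorem~\ref{thm:star}; no new machinery is required. First I would instantiate Theorem~\ref{thm:expander} with the uniform choice $r(v)=\lfloor h(G)\rfloor$ for every $v\in V$ (legitimate since $0\le \lfloor h(G)\rfloor \le h(G)$). The theorem then says the constrained expander is equivalent --- in the transaction-equivalence sense, and hence identical in liquidity --- to the star credit network $H$ whose center $u$ carries the fixed score $S_u=\sum_{v\in V} r(v)/2$ and each of whose spokes has capacity $c(v_i,u)=r(v_i)=\lfloor h(G)\rfloor$. Because liquidity depends only on the (uniform) distribution over reachable cycle-equivalence classes, and transaction-equivalent configurations agree on exactly which transactions succeed, the pairwise liquidities of the expander and of $H$ coincide.

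Next I would check that $H$ is precisely an instance of the constrained star of Theorem~\ref{thm:star}. With all spoke capacities equal to $\lfloor h(G)\rfloor$ we have $\sum_i c_i/2=\sum_{v\in V} r(v)/2=S_u$, and the second technical restriction in Theorem~\ref{thm:expander} guarantees $\sum_{v\in V} r(v)$ is even, so the floor appearing in the hypothesis of Theorem~\ref{thm:star} is vacuous and the center's score matches exactly. Applying Theorem~\ref{thm:star} with $c_i=c_j=\lfloor h(G)\rfloor$, the steady-state failure probability between any external pair $i,j$ is at most $4/(c_i+c_j)=2/\lfloor h(G)\rfloor$ and at least $2/(c_i+c_j+2)=1/(\lfloor h(G)\rfloor+1)$. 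Taking complements, the liquidity of every pair lies between $1-2/\lfloor h(G)\rfloor$ and $1-1/(\lfloor h(G)\rfloor+1)$, which is the claim.

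I do not expect a genuine obstacle here --- this is a direct corollary --- but the one step warranting care is verifying that the two "constrained star" notions line up: that the center score produced by Theorem~\ref{thm:expander} is exactly the $\lfloor\sum_i c_i/2\rfloor$ demanded by Theorem~\ref{thm:star}, together with the parity side-conditions of Theorem~\ref{thm:expander} (that $r(v)$ share the parity of $d(v)$). Wherever a parity condition fails, one replaces $r(v)$ by $r(v)-1$ as permitted by the footnote to Theorem~\ref{thm:expander}; this perturbs a spoke capacity by at most $1$ and hence shifts the endpoints of the liquidity interval by only an $O(1/\lfloor h(G)\rfloor^{2})$ amount, leaving the stated bounds intact. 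All of the real content sits in Theorem~\ref{thm:star} (Appendix~\ref{prf:star}) and in the equivalence asserted by Theorem~\ref{thm:expander}; granting those, the corollary is immediate.
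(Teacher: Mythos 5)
Your proposal is correct and follows exactly the route the paper intends: the corollary is obtained by instantiating Theorem~\ref{thm:expander} with $r(v)=\lfloor h(G)\rfloor$ to reduce to the constrained star, then reading off the failure-probability bounds of Theorem~\ref{thm:star} with $c_i=c_j=\lfloor h(G)\rfloor$ and taking complements. Your attention to the parity side-conditions and to matching the center's score with $\lfloor\sum_i c_i/2\rfloor$ is more explicit than the paper's (which simply defers to Appendix~\ref{prf:expander}), but the argument is the same.
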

\begin{proof} See Appendix \ref{prf:expander}\end{proof}

For comparison, \cite{goel2015connectivity} shows that the average liquidity in an unconstrained graph is at least $1-2/h(G)$, but the proof requires several more pages of analysis.  What is surprising here is that the liquidity between two specific vertices does not particularly depend on details of edge connections to those vertices within the graph.  $h(G)$ is rounded because edge capacities are integral.

\subsection{Monotonicity}
A credit network can be thought of as a specialized network for transferrring financial commodities.  Some kinds of networks for commodity transfer, such as road traffic, are subject to what is typically referred to as "Braess's Paradox" \cite{braess2005paradox}.  In short, the paradox is the observation that adding roads in a road network can reduce the overall throughput of the network, when drivers choose routes selfishly.

It would be highly undesirable if this paradox existed in Lightning-style credit network implementations.  In the anonymous world of cryptocurrencies, there is no obvious feasible manner of implementing anything other than selfish route choices.  It could be possible, then, that bad actors could attack the network and, for example, drive up transaction prices.

We would like, therefore, to prove that the addition of an edge to a graph will never decrease liquidity.  When studying unconstrained credit networks, the authors of \cite{goel2015connectivity} call this the ``monotonicity conjecture'' and show that this notion is equivalent to the well-studied negative correlation conjecture on graphical matroids; for more on that conjecture, see \cite{semple2008negative} and \cite{choe2006rayleigh}.

One could directly generalize the monotonicity conjecture to constrained credit networks.  Specifically, one might hope that the addition of an edge in a network containing constrained stars will not decrease the liquidity between any two points.  However, this notion is false.

\begin{example}
\label{example:nonmono}

Let $G$ be a star graph with four endpoints $v_1, v_2,v_3,$ $ v_4$ and center $u$, where the capacity of each edge is $1$ and the score of $u$ is constrained to be $2$.  Then the liquidity between any two endpoints is $1/3$.  Let $H$ be formed an edge between $v_3$ and $v_4$ of capacity $n$.

In $H$, the liquidity between $v_1$ and $v_2$ is $(n+2)/(4n+6)$, which is decreasing in $n$ and always less than $1/3$.  

\end{example}

Graph-like objects that disobey a monotonicity-like conjecture, especially simple examples, are rare.  It would be interesting to understand how constraints enable this qualitative change in behavior.

Although a direct analogue of a monotonicity conjecture is violated, note that liquidity still respects the bounds implied by Theorem \ref{thm:star}.  In particular, Theorem \ref{thm:star} implies that the liquidity between $v_1$ and $v_2$ is at least $1/4$, and for all $n\geq 0$, $(n+2)/(4n+6)>1/4$.  We conjecture, therefore, that while liquidity might decrease as in Example \ref{example:nonmono}, the bounds implied by Theorem \ref{thm:star} are never broken.

Clearly, if there exists some pair of vertices $x$ and $y$ in subgraph of expansion $\beta$ with pairwise liquidity $\alpha>1-1/\beta$, then this replacement must decrease their pairwise liquidity.  More generally, in such a graph, if a vertex $x^\prime$ is connected to only $x$ by an edge of capacity $1$, and a vertex $y^\prime$ is connected only to $y$ by an edge of capacity $1$, then the liquidity before replacement is $\alpha/4$, while after the replacement it is at least $(1-2\beta)/4$. 

More specifically, we conjecture that a multiplicative liquidity reduction by a factor of $1-2\beta$ is the worst reduction that can happen.


\begin{conjecture}
Let $G=(V,E)$ be a credit network, let $S\subset V$ with subgraph expansion $h_S(G)$, and let $H$ be the credit network formed by replacing $S$ in $V$ with a constrained star, as in Theorem \ref{thm:star}.  Then for all $u, v\in V$, the liquidity between $u$ and $v$ in $H$ is at least the liquidity of between $u$ and $v$ in $G$ multiplied by $1-2/h_S(G)$.  

\label{conjecture:monotonicity}
\end{conjecture}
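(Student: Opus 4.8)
Because Conjecture \ref{conjecture:monotonicity} is stated as a conjecture, what follows is the attack I would mount rather than a finished proof. Write $L_G(u,v)$ for the liquidity between $u$ and $v$ in $G$, and let $z$ be the centre of the replacement star $H$. The plan is to isolate the one case that is genuinely local, and then reduce the general case to a correlation inequality between the ``internal'' degrees of freedom of $S$ and those of the much simpler star.

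\textbf{Endpoints touching $S$.} If $u,v\in S$ (or, by the same argument, just one of them), then in $H$ they are leaves of a constrained star whose incident capacities are of order $h_S(G)$, so Theorem \ref{thm:star} gives $L_H(u,v)\ge 1-4/(c(u,z)+c(v,z))\ge 1-2/h_S(G)$. Since $L_G(u,v)\le 1$, the claimed bound $L_H(u,v)\ge (1-2/h_S(G))\,L_G(u,v)$ holds immediately. So the content of the conjecture is entirely in the case $u,v\notin S$.

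\textbf{The general case.} Here I would condition on the net flow $\phi$ crossing the cut $\partial S$. By route-independence (Theorem \ref{thm:predicates}) membership in a cycle-equivalence class is captured by the score vector, and fixing $\phi$ leaves the network on $V\setminus S$ together with its boundary edges \emph{identical} in $G$ and $H$; hence the event ``$u$ can reach $v$ given interface $\phi$'' is one and the same event $g(\phi)$ in both networks, and a counting argument over score vectors yields $L_G(u,v)=\sum_\phi \mu_G(\phi)g(\phi)$ and $L_H(u,v)=\sum_\phi \mu_H(\phi)g(\phi)$, where $\mu_G(\phi)\propto N_{\mathrm{out}}(\phi)\,N_S(\phi)$ and $\mu_H(\phi)\propto N_{\mathrm{out}}(\phi)\,N_{\star}(\phi)$, with $N_{\mathrm{out}}$ the common ``outside'' multiplicity and $N_S,N_{\star}$ the numbers of internal configurations of $S$ and of the star compatible with $\phi$. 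Two things then have to be checked: that every boundary-flow pattern realizable through the star is also realizable through $S$ (this is exactly where the expansion $h_S(G)$ enters), and that, summed over the interfaces $\phi$ with $g(\phi)=1$, the normalized star multiplicities $N_{\star}$ dominate a $(1-2/h_S(G))$-fraction of the normalized $S$ multiplicities $N_S$. A secondary technicality is that cycles of $G$ may cross $\partial S$, so this decomposition of the uniform measure must be set up on the score-vector side rather than edge by edge.

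\textbf{The hard part.} The crux is the last comparison of internal multiplicities. This is a correlation statement of precisely the type that already makes the unconstrained monotonicity conjecture difficult --- \cite{goel2015connectivity} shows it is equivalent to negative correlation on graphical matroids --- so one should not expect a short argument in full generality, and Example \ref{example:nonmono} shows the multiplicative slack is unavoidable: any proof must genuinely use the node constraint. The reason for optimism is that the constraint collapses the star's internal structure essentially to a single one-dimensional random threshold (how the centre's fixed budget is split between the two relevant leaves), which is what produces the $1-2/h_S(G)$ slack; the same threshold structure already underlies the proof of Theorem \ref{thm:star}. A natural first milestone, strictly weaker than the conjecture and probably the right thing to settle first, is the claim noted above that edge additions never violate the two-sided bounds of Theorem \ref{thm:star} --- that asks only for a crude one-sided comparison and avoids pinning down the sharp constant.
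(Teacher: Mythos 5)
This statement is left as an open conjecture in the paper --- there is no proof of it anywhere in the text --- so there is nothing to compare your proposal against, and your proposal, which you candidly frame as a plan of attack, does not close the gap either. The parts of your sketch that do align with the paper are the framing: the paper likewise motivates the conjecture by observing that Example \ref{example:nonmono} forces a multiplicative loss, and that the bounds of Theorem \ref{thm:star} appear never to be violated; your proposed ``first milestone'' is essentially the paper's own informal sub-conjecture, and your identification of the crux as a correlation inequality of negative-correlation type matches the paper's discussion of monotonicity.

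That said, two steps you present as easy are not. First, the claim that the case where only one of $u,v$ lies in $S$ follows ``by the same argument'' as the case $u,v\in S$ is wrong: if $v\notin S$, Theorem \ref{thm:star} says nothing about $L_H(u,v)$, which can be far below $1-2/h_S(G)$ (e.g.\ if $v$ is attached to the rest of $H$ by a single unit-capacity edge, $L_H(u,v)\le 1/2$), so you cannot discharge it via $L_G(u,v)\le 1$; this case is as hard as the general one. Even for $u,v\in S$ the citation of Theorem \ref{thm:star} is not licensed: that theorem analyzes a standalone star under the uniform distribution over \emph{its} cycle-equivalence classes, whereas in $H$ the star is embedded in a larger network, its leaves retain their boundary edges, and the stationary distribution of $H$ need not marginalize to the standalone star's uniform distribution. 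Second, in your general-case decomposition the event ``$u$ can reach $v$ given interface $\phi$'' is \emph{not} the same event in $G$ and $H$: when every $u$--$v$ path traverses $S$, reachability depends on the internal connectivity pattern of $S$ (respectively of the star), not only on the states of the boundary edges, so $g(\phi)$ must be replaced by a joint object (interface, boundary-connectivity pattern) as in the treewidth algorithm of Appendix B.2 --- and comparing the joint distributions of that object under $\mu_G$ and $\mu_H$ is precisely where the negative-correlation difficulty resides. Relatedly, the factorization $\mu_G(\phi)\propto N_{\mathrm{out}}(\phi)N_S(\phi)$ that you treat as a ``secondary technicality'' is itself unproven: cycle-equivalence classes are in bijection with realizable score vectors, and realizability is a global flow condition that does not factor across the cut. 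Your outline is a sensible research program, but each of these four points needs a genuine argument before any part of the conjecture is established.
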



\section{Cryptocurrency Applications and Future Work}

Cryptocurrency innovations like the Lightning network on Bitcoin rely on a network of bilateral payment channels that behaves like a credit network.  However, in the trustless cryptocurrency context, manufacturing the trust required for a payment channel requires committing capital into escrow per edge, which has a cost.  Higher maintenance costs generally mean higher transaction costs for users.  The network, therefore, is incentivized to find a design that gives a good tradeoff between liquidity and total escrow costs.

As shown in Section \ref{sec:liquidityanalysis}, a star-like design where every agent has a global lending limit achieves the optimal tradeoff between liquidity and total escrow costs.  In practice, this would look roughly like a large, permissioned smart contract.

For simplicity of example, suppose that every agent has $\$D$ to invest into $D$ edges, each with capacity 2 (so each agent is initially responsible for $D$ units of escrow).   Let the expansion of the resulting graph $G$ be $h(G)$.

Two agents transacting {\it only} with each other could {\it at best} get liquidity $1-1/2D$ (and liquidity $0$ with all others).  Using a standard Lightning system, agents only get on average pairwise liquidity $1-2/h(G)$.  By switching to a multiparty contract, {\it every} pair of agents can achieve liquidity $1-2/D$ - the asymptotically optimal tradeoff.  
The exact savings will vary by graph, but $h(G)$ can be much smaller than $D$.  
Furthermore, routing across a multiparty contract is trivial, and routes are valid unless the sender is bankrupt.


\subsection{Future Work}

Section 3 shows that adding constraints preserves most useful properties of credit networks.  However, it does eliminate the correspondence between the forests and score vectors.  Unfortunately, the bijection in \cite{kleitman1981forests} is algorithmic and, other than in tree-like graphs, an analysis of the constrained credit network using forests is not obvious.  We leave this as an area of future research.

In section 5, the escrow savings require assumptions on the transaction distribution and might disappear if Conjecture \ref{conjecture:monotonicity} were false.  We leave as future work an experimental analysis of real-world Lightning networks, particularly with regard to the tradeoff between subgraph expansion, escrow savings, and implementation concerns.  We would also like to understand how more realistic distribution assumptions would affect our results.  Similarly, many real-world implementations of payment networks have strategic agents that resettle network links via on-chain transactions when, for example, a net balance grows too much (e.g. \cite{branzei2017charge}).  We would like to understand how these resettlement policies 

In \cite{dandekar2015strategic}, Dandekar et al consider the strategic formation of credit networks under a model of balancing  liquidity against exposure to defaulting trade partners.  Constraints allow for many interesting scenarios in which to study the behavior of rational agents.  For example, a node constraint is equivalently a guarantee that one will not borrow more than a total amount.  In some contexts, guarantees like this on nodes or groups could lead to larger bilateral lines of credit.  Understanding the incentives at play could improve designs of credit network-like systems.


\section{Conclusion}
The credit network is a model for transactions across a network of agents.  Initially studied in contexts related to social networks, the model forms a close abstraction of modern cryptocurrency ``Layer 2'' protocols like Lightning that are being deployed across the internet.
However, the credit network is limited in its ability to describe agent behavior.  We study the effects of constraining the behavior of agents in a credit network beyond the implicit constraints in a credit network, or alternatively, the effects of limited guarantees on agent solvency.  In particular, these constraints preserve the combinatorial structure of credit networks.  Aggregate node-based borrowing constraints transform complicated graphs into simple stars, showing that the details of graph structure ultimately are of little significance.  These constraints also enable modeling of more interesting node behavior, and moreover, assuming Conjecture \ref{conjecture:monotonicity}, the reduction from complex graphs to star graphs achieves the optimal tradeoff between liquidity and escrow costs in a Lightning-style payment network.

\bibliographystyle{ACM-Reference-Format}
\bibliography{unified-bib}

\section*{Appendix}

\appendix
\section{Proof of Theorem \ref{thm:predicates}}

\begin{proof}
First, note that a well-formed predicate cannot distinguish representatives of a cycle-equivalent class, so the equivalence relation retains a well-defined structure.  We also note that Boolean combinations of well-defined predicates are well-defined predicates.

\begin{enumerate}
\item

  Without the predicate, given two feasible routes $p_1$ and $p_2$ from $x$ to $y$ and some configuration $c$, $p_1(c)$ is cycle-equivalent to $p_2(c)$.  As a predicate can only declare that certain cycle-equivalence classes are valid or invalid, and thus not comment on the intermediate routing computation steps, its evaluation must only depend on the resulting cycle-equivalent state.  Hence, the choice of route does not affect the result.

\item

Suppose $c_1$ and $c_2$ are cycle-equivalent configurations.  Then given that a well-formed predicate cannot distinguish cycle-equivalent states, a predicate can only invalidate a transaction on $c_1$ if and only if it invalidates the transaction on $c_2$.  Hence, cycle-equivalence implies transaction-equivalence.

Conversely, let $C$ be a set of cycle-equivalence classes that are all reachable from each other given a well-formed predicate, and let $c_1$ and $c_2$ be distinct cycle-equivalence classes.  Then there is a sequence of transactions $\tau$ that takes $c_1$ to $c_2$.  Observe that $\tau$ must decrease the score of at least one vertex, and the score of any vertex is bounded below by $0$.  Then there exists a $k$ such that $\tau^k(c_1)$ is a valid transaction but $\tau^{k+1}(c_1)$ is not.  But $\tau^{k+1}(c_1)=\tau^k(c_2)$.  Hence $c_1$ and $c_2$ are not transaction-equivalent.   

Without the reachability requirement, a predicate could, for example, invalidate all transactions.  Then non cycle-equivalent states would be vacuously transaction-equivalent.

\item

The same argument used in Theorem 5 of \cite{dandekar2011liquidity} holds here.  For completeness, we include the proof.  

Let $\tau_{ij}$ be the transactions taking $C_i$ to $C_j$, and let $P(C_i, C_j)=\sum_{(s,t) \in \tau_{ij}}\lambda_{st}$.  This generates a symmetric matrix where the sum of all the entries in each row and column is a constant, so this matrix normalized is a symmetric stochastic transition matrix.  In fact, this matrix is the transition matrix of the Markov chain used to define liquidity.  Since this transition matrix is a stochastic symmetric matrix, the uniform distribution is stationary.

Note that a Markov chain on a set of states reachable from each other is by definition irreducible.  Since transactions are invertible, there are no sink states.  And we can always make the chain aperiodic by adding a $1/2$ probability of staying in place.  Hence, the Markov chain is ergodic.

\end{enumerate}
\end{proof}

We note that it is important for the proof of $(2)$ that we look only at cycle-equivalence classes reachable from each other.  Suppose that one predicate evaluates to $1$ only on a small number of isolated configurations.  Then in every valid state, no transaction is viable, so all states are transaction-equivalent.

\section{Algorithms for Computing Liquidity in Trees}
\subsection{Proof of Theorem \ref{thm:tree}}
\label{prf:tree}
Consider the following dynamic programming algorithm for computing the liquidity between two vertices in a tree where nodes have aggregate constraints.


Pick some vertex $r$ to be the root of the tree.  Let $p(v)$ denote the parent of $v$, and $q_i(v)$ the $i$th child of $v$.  Let $d(v)$ be the number of children of $v$.  

Let $C(v, k)$ be the number of configurations in the subtree (satisfying all subtree constraints) rooted at $v$ such that $w(p(v), v))=k$.  It suffices to show how to compute $C(v, \cdot)$ given access to $C(q_i(v), \cdot)$ (if $q_i(v)$ is a leaf vertex, we say $C(q_i(v), \cdot)=1$).

Let $D_i(v, k)$ be the number of configurations of the subtree consisting of $v$ and the first $i$ child subtrees such that the score of $v$ is $k$.  Then $D_{i+1}(v, k)=\sum_{j=0}^{c(v, q_{i+1}(v))}D_i(v, k-j)*C(q_{i+1}(v), j)$, and let $D_0(\cdot, \cdot)=1)$.

Let $X_v$ be the set of scores of $v$ that are considered valid by the constraints.

Then $C(v, k)= \sum_{i\in X_v}D_{d(v)}(v, i - (c(p(v), v)-k))$, where $D_{d(v)}$ is $0$ if not defined otherwise above.

The total number of configurations is thus $\sum_{i\in X_r}D_{d(r)}(r, i)$.

The above algorithm iterates over, at each vertex, all possible ways such that that vertex has a specific score.  This can be extended to also satisfy some simple local predicates by altering this iteration to exclude particular classes of configurations.

For example, to track liquidity, note that there is a unique path from $u$ to $v$.  As the algorithm walks across the graph, it can simply throw out configurations where an edge on this path is entirely oriented in the wrong direction.

Let $H$ be the maximum capacity of an edge.  

Storing $C$ requires $O(\vert E\vert H)$ entries and, given oracle access to $C$ on the children of a vertex $v$, requires computation time $O(d(v)H)$.  So running the entire algorithm requires time polynomial in $H$ and the size of the graph.

\subsection{Proof of Theorem \ref{thm:treewidth}}
\label{prf:treewidth}

Let $G=(V, E)$ be some graph of treewidth $k$, and let $(X_i, (I, F))$ be a nice tree decomposition of $G$, as described in Theorem 7 of \cite{bodlaender1998partial}.  Without loss of generality, assume that the leaf vertices have no constraints, and that non leaf vertices $v$ are constrained to have fixed score $s_v$.  

Iterating from the leaves to the root of the tree decomposition, the algorithm maintains a complete list of score vectors of active vertices and a count of the number of ways that the induced subtree can produce this score vector on the active vertices (not including edges between active vertices) while satisfying the node constraints of inactive vertices in the subtree.

At a leaf node, the only score vector is $(0)$, which occurs with multiplicity $1$.  

At an introduce node $X_i$ with child $X_j$ that introduces a node $v$, the algorithm simply adds an entry of $0$ to every active score vector corresponding to the score of $v$.  This maintains the induction invariant, as there are no edges between the introduced vertex and the inactive vertices of the subtree.

At each join node with children $X_i$ and $X_j$ (with $X_i=X_j$), the algorithm iterates over every pair of score vectors, with one from $X_i$ and one from $X_j$, adding score vectors coordinatewise and multiplying multiplicities.  The resulting list is then de-duplicated, with multiplicities added as necessary.  Join nodes join disjoint subtrees, so score vectors of disjoint subconfigurations add coordinatewise, and there is no double-counting of edges (since edges between active vertices are not yet accounted for).

At a forget node $X_i$ that forgets a node $v$ (with child $X_j=X_i\cup\lbrace v\rbrace$), let $Y$ be the edges between $v$ and the vertices of $X_i$.  For every active score vector $s$ of $X_i$, the algorithm computes a list of potential score vectors of $X_i$ based on $s$, by iterating over all ways of directing the edges of $Y$.  These score vectors are de-duplicated, and their multiplicities are set to be the multiplicity of $s$.  The algorithm then collates all these lists and de-duplicates them, adding multiplicities when necessary.  The algorithm then asks if $v$ satisfies the constraints.  If yes, it preserves that vector, with the entry for $v$ removed.  Otherwise, it drops that vector.  The algorithm then merges duplicate score vectors, adding multiplicities when duplicates occur.

Furthermore, since satisfaction of each vertex's constraint is based only on the score of an individual vertex, and once a vertex is ``forgotten,'' all of its edges have been accounted for in a particular configuration $c$ of the subtree, if a constraint is satisfied when its vertex is forgotten, it will be satisfied in any configuration that extends $c$ to the entire graph.

 At the final node of the tree decomposition, then, the algorithm is left with an empty score vector and a count of the number of score vectors satisfying all the vertex constraints.
 
Let $I$ be the number of states tracked at any node.  Then evaluating an introduce node takes time $O(I)$, a join node takes time $O(I^2)$, and a forget node takes time $O(S^k I + I^2)$.  There are at most $O(k\vert V\vert)$ nodes, so the entire algorithm takes time $O(k\vert V\vert (S^k + I^2))$.  
 
Let $S$ be the maximum (unconstrained) score of any vertex.  Then the number of score vectors tracked at any one node is at most $S^k$.  Hence, the entire algorithm takes time $O(k\vert V\vert (S^k + S^{2k})$

This algorithm computes the number of score vectors satisfying the node constraints.  To compute liquidity from a vertex $u$ to a vertex $v$, the algorithm needs to also track connectivity patterns between vertex.  

In particular, with each score vector, the algorithm also associates a directed graph on the set of active vertices, where an edge from $x$ to $y$ in this graph signifies that in subtree configuration that produces this score vector and directed graph, there is a directed path from $x$ to $y$.

Leaf nodes start with an empty connectivity graph with only a single vertex.  Introduce nodes add a disconnected node to the connectivity graph.  Forget nodes, when iterating over arrangements of edges between the forgotten vertex and active nodes, compute which vertices gain new connections and update the graph accordingly, while dropping the forgotten vertex from the connectivity graph unless the forgotten vertex is one of $u$ or $v$.  Join nodes, when pairing subconfigurations from different subtrees, simply take the union of paired connectivity graphs.  This maintains the connectivity patterns throughout the computation.  At the end, the algorithm is left with connectivity graphs containing only $u$ and $v$, and the number of score vectors satisfying the constraints that generate each connectivity pattern.  Liquidity follows with a little arithmetic.

This increases the size of the state space at each node of the tree-decomposition by at most a factor of $2^{(k+1)k}$.  So the entire algorithm takes time $O(k \vert V\vert S^{2k}2^{2(k+1)k}))$.  These time complexity bounds are likely not tight.

\section{Proof of Theorem \ref{thm:star}}
\label{prf:star}



\begin{proof}
In the unconstrained star graph, the number of configurations where $w(v_i, u)=k$ for some $v_i$ would be constant for all feasible $k$.  In this situation, this may not be the case, but in fact, if the score of the center vertex is half its (unconstrained) maximum, a symmetry argument shows that the number of cases where $w(v_i, u)=k$ and $w(v_i, u)=c(v_i, u)-k$ are equal.  

Let $n$ be the number of external vertices, and let $G_i$ be the star graph consisting of only the central vertex and the first $i$ vertices.  Let the number of configurations of $G_i$ in which $S_u=k$ be $C(G_i, k)$, and let $M_i=(\sum_{j=1}^i c_i)/2$.  Then for any $k$,  $C(G_{i+1}, k)=\sum_{j=0}^{c_{i+1}} C(G_i, k-j)$.  Observe that if $C(G_i, \cdot)$ is maximized at $M_i$ and symmetric about $M_i$, then $C(G_{i+1}, \cdot)$ is maximized at $M_i + (c_{i+1}/2) = M_{i+1}$ and symmetric about $M_{i+1}$ (if $M_i$ was rounded down, then $C(G_i, M_i)=C(G_i, M_i+1)$ and $C$ is symmetric about $M_i+1/2$).  

Since this symmetry and maximization in the middle holds for $i=2$, induction shows this holds for $i=n$.  Let $M=\sum_i c_i/2$.

Now, for any two vertices $v_i$ and $v_j$, for any configuration, let $\Delta_k=w(v_i, u) + w(v_j, u)$.  
Let $H$ be the unconstrained graph without vertices $v_i$ and $v_j$, let $A_k$ be the number of states of $H$ such that $S_u=M-\Delta_k$, and let $B_k=\vert A_k\vert$.  

Consider the one-to-many map from $A_k$ to states of $G$ that completes a state in $A_k$ to a state of $G$ by choosing any values for $w(v_i, u)$ and $w(v_j, u)$ such that $w(v_i, u)+w(v_j, u)=\Delta_k$.  For $\Delta_k\leq (c_i+c_j)/2$, there are $\Delta_k+1$ choices, and for $\Delta_k\geq (c_i+c_j)/2$, there are $(c_i+c_j)-\Delta_k+1$ choices.  Furthermore, note that in exactly one of these choices is $v_i$ bankrupt.  Note that these maps have well-defined inverses and the images of each $A_k$ are disjoint.

Now, consider the case where $0\leq \Delta_k\leq (c_i+c_j)/2$.  

Let $T=(c_i+c_j)/2$.  
The organization of states above enables counting the number of state where transactions fail and number of transactions overall.   As such, the probability $p$ of a failed transaction (conditioned on $0\leq \Delta_k\leq T$) is therefore simply the weighted summation $(\sum_{i=0}^T B_i)/(\sum_{i=0}^T (i+1)B_i)$.  By Chebychev's sum inequality, $1/p \geq ((\sum B_i)/(\sum B_i)) (1/T) (\sum_i (i+1))=(T+2)(T+1)/2T\geq T/2$. so the probability of transaction failure is at most $2/(T)$.  By a symmetricity argument, the same result holds when $(c_i+c_j)/2\leq \Delta_k\leq (c_i+c_j)$, so the result holds overall.

Conversely, the probability of transaction failure is a weighted summation of the probability of transaction failure conditioned on a particular $\Delta_k$, and thus must be at least the minimum of these conditional probabilities, which is $1/(T+1)$.

\end{proof}

If the score of the central vertex is constrained to be less than half its maximum, but close to half, then a (crude) bound on failure probability can be obtained by decreasing (artificially for analytical purposes) some of the capacities of the edges until the score is half of the reduced maximum.  The same holds for central scores larger than half the maximum.

The above recurrence relation in effect counts the number of ways in which to put $S_u$ indistinguishable items into $n$ boxes of varying sizes $c_i$.  When the $c_i$s are constant, the number of states is simply the generalized binomial coefficient.

\section{Proof of Theorem \ref{thm:expander}}
\label{prf:expander}
\begin{proof}

Let $G$ be a credit network with edge expansion $h(G)$, constrained to ensure that for all vertices $v$, $s_v\in ((d(v)-r(v))/2,$$ (d(v)+r(v))/2)$.


In any configuration of a credit network, it is impossible for vertex $v$ to pay vertex $u$ if and only if there exists a partition of the graph into a set $A$ and $B=V\setminus A$ such that $v\in A$ and $u\in B$ and all edges $(a,b)$ between $B$ and $A$ satisfy $w(a,b)=c(a,b)$, $w(b,a)=0$.  

Let $u,v\in V$, and let $A$ and $B$ be any partition of $V$ separating $v$ and $u$ such that the cut prevents $u$ from paying $v$.  Without loss of generality (by symmetry of the constraints), suppose $\vert A\vert \leq \vert B\vert$.  

Suppose there are $x\geq \vert A\vert h(G)$ edges pointing into $A$.  Then the number of edges contained within $A$ is 
$(\sum_{v\in A}d(v)-x)/2$.  Hence, the sum of the scores of every vertex in $A$ is $\sum_{v\in A}d(v) + x/2$.  Because $x\geq h(G)\vert A\vert$, the sum of scores in $A$ exceeds the aggregate bound implied by the individual bounds on all vertices in $A$, which is a contradiction.  

Hence, the only constraints that affect whether vertex $u$ can pay vertex $v$ are the per-vertex constraints we imposed on top of the credit network.  Particularly, each vertex is constrained to deviate only at most $r(v)/2$ from a score of $d(v)/2$.  

Let $H$ be a star network with a new vertex $u$ in the center, where $S_u$ is constrained to be constant (in fact, $S_u=\Sigma_{v\in V}r(v)/2$), and let $c(v_i, u)=r(v)$.  Note that for any score vector of $H$, adding $(d(v)-r(v))/2$ to each vertex's score gives a vector in $G$ satisfying the constraints on $G$.  Then any transaction in $G$ is feasible if and only if the corresponding transaction is feasible in $H$, and moreover, this correspondence between score vectors is maintained under transactions.

Thus, with regards to transaction feasibility, $G$ is equivalent to $H$.

We note that further constraining a vertex in $G$ only shrinks the capacity of the corresponding edge from that vertex to the center.

\end{proof}

\end{document}